\documentclass[11pt,a4paper]{article}

\usepackage{fullpage}
\usepackage[utf8x]{inputenc}
\usepackage{amsmath, amsthm}
\usepackage{wrapfig,graphicx,amssymb,textcomp,array,amsmath}
\usepackage{algpseudocode} 
\usepackage{enumerate}
\usepackage{enumitem}
\usepackage{multirow}
\usepackage{tabularx}
\algtext*{EndWhile}
\algtext*{EndIf}
\usepackage{algorithm}
\usepackage{color}
\usepackage{authblk}

\setlength{\arraycolsep}{0in}

\newcommand{\CH}[1]{\text{$CH(#1)$}}

\newcommand{\K}[2]{\text{$K(#1,#2)$}}

\newcommand{\require}{\textbf{Input: }}
\newcommand{\ensure}{\textbf{Output: }}
\newcommand{\setup}{\textsf{plane-tree}}
\newcommand{\procA}{\textsf{proc1}}
\newcommand{\procB}{\textsf{proc2}}
\newcommand{\passrefine}{\textsf{pair-refine}}

\newcommand{\dgT}[2]{d_{#1}(#2)}
\newcommand{\polylog}{{\rm polylog}}

\title{Plane Bichromatic Trees of Low Degree
\thanks{Research supported by NSERC.}
}

\author{
Ahmad Biniaz\qquad 
Prosenjit Bose\qquad
Anil Maheshwari\qquad 
Michiel Smid}
\affil{School of Computer Science, Carleton University, Ottawa, Canada.}

\date{\today}
\newtheorem{lemma}{Lemma}

\newtheorem{conjecture}{Conjecture}
\newtheorem{theorem}{Theorem}
\newtheorem{observation}{Observation}
\newtheorem*{problem*}{Problem}
\begin{document}

\maketitle

\begin{abstract}
Let $R$ and $B$ be two disjoint sets of points in the plane such that $|B|\leqslant |R|$, and no three points of $R\cup B$ are collinear. We show that the geometric complete bipartite graph $\K{R}{B}$ contains a non-crossing spanning tree whose maximum degree is at most $\max\left\{3, \left\lceil \frac{|R|-1}{|B|}\right\rceil + 1\right\}$; this is the best possible upper bound on the maximum degree. This solves an open problem posed by Abellanas et al. at the Graph Drawing Symposium, 1996.
\end{abstract}
\section{Introduction}
\label{introduction-section}

Let $R$ and $B$ be two disjoint sets of points in the plane. We assume that the points in $R$ are colored red and the points in $B$ are colored blue. We assume that $R\cup B$ is in {\em general position}, i.e., no three points of $R\cup B$ are collinear. The {\em geometric complete bipartite graph} $\K{R}{B}$ is the graph whose vertex set is $R\cup B$ and whose edge set consists of all the straight-line segments connecting a point in $R$ to a point in $B$. A {\em bichromatic tree} on $R\cup B$ is a spanning tree in $\K{R}{B}$. A {\em plane bichromatic tree} is a bichromatic tree whose edges do not intersect each other in their interior. A {\em $d$-tree} is defined to be a tree whose maximum vertex degree is at most $d$. 

If $R\cup B$ is in general position, then it is possible to find a plane bichromatic tree on $R\cup B$ as follows. Take any red point and connect it to all the blue points. Extend the resulting edges from the blue endpoints to partition the plane into cones. Then, connect the remaining red points in each cone to a suitable blue point on the boundary of that cone without creating crossings. This simple solution produces trees possibly with large vertex degree. In this paper we are interested in computing a plane bichromatic tree on $R\cup B$ whose maximum vertex degree is as small as possible. This problem was first mentioned by Abellanas et al.~\cite{Abellanas1996} in the Graph Drawing Symposium in 1996:

\begin{problem*}
Given two disjoint sets $R$ and $B$ of points in the plane, with $|B|\leqslant |R|$, find a plane bichromatic tree on $R\cup B$ having maximum degree $O(|R|/|B|)$.
\end{problem*}

Assume $|B|\leqslant |R|$. Any bichromatic tree on $R\cup B$ has $|R|+|B|-1$ edges. Moreover, each edge is incident on exactly one blue point. Thus, the sum of the degrees of the blue points is $|R|+|B|-1$. This implies that any bichromatic tree on $R\cup B$ has a blue point of degree at least $\frac{|R|+|B|-1}{|B|}=\frac{|R|-1}{|B|}+1$. Since the degree is an integer, $\left\lceil \frac{|R|-1}{|B|}\right\rceil + 1$ is the best possible upper bound on the maximum degree.

\begin{wrapfigure}{r}{0.21\textwidth}
  \begin{center}
\vspace{-20pt}
\includegraphics[width=.18\textwidth]{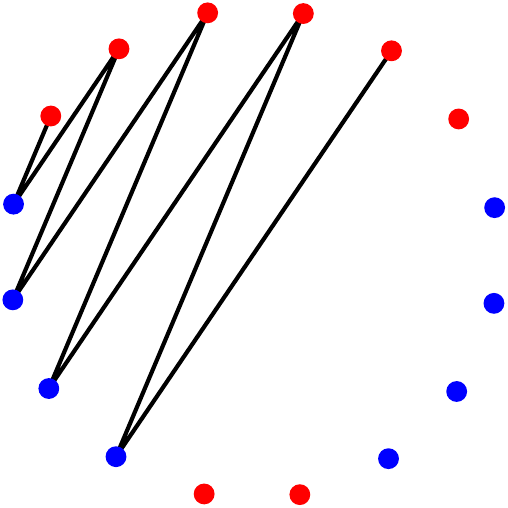}
  \end{center}
\vspace{-20pt}
\end{wrapfigure}
For cases when $|R|=|B|$ or $|R|=|B|+1$ it may not always be possible to compute a plane bichromatic tree of degree $\left\lceil \frac{|R|-1}{|B|}\right\rceil + 1=2$, i.e., a plane bichromatic path; see the example in the figure on the right which is borrowed from~\cite{Abellanas1999}; by adding one red point to the top red chain, an example for the case when $|R|=|B|+1$ is obtained. 
In 1998, Kaneko~\cite{Kaneko1998} posed the following conjecture.
\begin{conjecture}[Kaneko~\cite{Kaneko1998}]
\label{conj1}
Let $R$ and $B$ be two disjoint sets of points in the plane such that $|B|\leqslant|R|$ and $R\cup B$ is in general position, and let $k=\left\lceil\frac{|R|}{|B|}\right\rceil$. Then, there exists a plane bichromatic tree on $R\cup B$ whose maximum vertex degree is at most $k+2$.
\end{conjecture}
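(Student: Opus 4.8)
The plan is to decouple the problem into a \emph{combinatorial} degree-accounting part, which is easy, and a \emph{geometric} planarity part, which is the real content, and to bind the two together through a balanced subdivision of the plane. Since $k=\lceil |R|/|B|\rceil$ satisfies $k\,|B|\geqslant |R|$, I would first subdivide the plane into $|B|$ pairwise-disjoint convex cells $C_1,\dots,C_{|B|}$, each containing exactly one blue point and at most $k$ red points. Such an equitable convex subdivision can be obtained by a recursive ham-sandwich-type cut: a line bisecting both $R$ and $B$ (up to parity) splits the instance into two subinstances of essentially the same red-to-blue ratio, and recursing down to single-blue cells leaves at most $\lceil |R|/|B|\rceil=k$ red points in each cell. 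Inside each cell I would join its blue point to all of its red points by a star; since the cell is convex, every such star lies inside the cell, is plane, and contributes blue-degree at most $k$.

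Next I would splice the $|B|$ stars into one spanning tree using $|B|-1$ bichromatic edges. The cells inherit a natural linear order from the recursion tree, say the order in which its leaves are visited, with the crucial property that consecutive cells $C_i,C_{i+1}$ share a portion of one of the cut lines. I would add a single edge joining $C_i$ to $C_{i+1}$ for each $i$, so the result is connected and acyclic. Each such linking edge is incident to exactly one blue point, and each cell is adjacent to at most two cells in the order, so at most two linking edges touch any cell; consequently every blue point is the blue endpoint of at most two linking edges and ends with degree at most $k+2$, while every red point (a leaf of its star plus at most two links) has degree at most $3\leqslant k+2$. Notably this accounting is robust: even if planarity forces both linking edges of a cell to attach at its blue point, the bound $k+2$ is not exceeded, because a cell can never be forced to receive a \emph{third} link.

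The main obstacle is therefore entirely geometric: realizing the $|B|-1$ linking edges so that they are bichromatic, cross neither the stars nor the cut lines' interiors badly, and do not cross one another. A naive chord $b_i r_{i+1}$ between the two cells can sweep across a star and create crossings, so the edge must be chosen as a \emph{common tangent} between the two already-built structures on either side of the shared boundary—this is exactly the role of the upper and lower tangents in the excerpt's notation—since a tangent to the convex hulls of the two pieces avoids both pieces. The difficulty is that this tangent edge must simultaneously be bichromatic (joining a red point to a blue point), which is not guaranteed by a plain convex-hull tangent, and that the chosen links must remain mutually non-crossing across all levels of the recursion. I expect the bulk of the work to lie in proving a local lemma of the form ``across any shared cell boundary one can always select a non-crossing bichromatic attaching edge at an endpoint whose degree budget is not yet spent,'' most likely by processing cells via the convex hull and using tangent lines from the current tree to carve off each new star in a region disjoint from everything built so far; verifying that such a wedge always contains at most $k$ red points while still admitting a valid tangent attachment is the step I anticipate will dominate the argument.
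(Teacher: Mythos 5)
Your overall architecture---an equitable convex subdivision into $|B|$ cells, each with one blue point and at most $k$ red points, a star inside each cell, then $|B|-1$ non-crossing bichromatic linking edges that raise each vertex degree by at most two---is exactly the paper's proof for the case $|R|=k|B|$ (Theorem~\ref{th1}). The genuine gap sits at precisely the step you yourself flag as ``the bulk of the work'': the existence of the linking edges. The paper does not prove a local lemma of the kind you describe; it invokes the augmentation theorem of Hoffmann and T\'{o}th (Theorem~\ref{Hoffmann-thr}), a non-trivial published result stating that every disconnected plane bichromatic graph with no isolated vertices can be augmented into a connected plane bichromatic graph so that each vertex degree increases by at most two. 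Your proposed substitute---linking consecutive cells of the recursion in a path order via tangent edges---has two concrete defects beyond being unproven: (i) a segment joining a point of $C_i$ to a point of $C_{i+1}$ need not lie in $C_i\cup C_{i+1}$ (the union of two convex cells is not convex), so it can sweep through stars in distant cells; hence local non-crossing across a shared boundary does not give global non-crossing, and your ``at most two links per cell'' accounting cannot be verified cell-by-cell; (ii) as you concede, a common tangent of two convex hulls need not be bichromatic, and the paper exhibits a configuration showing that a degree increase of two is sometimes unavoidable, so any correct local lemma must be exactly as strong as Theorem~\ref{Hoffmann-thr}. Since that lemma is left as an expectation rather than a proof, the proposal is an outline; citing Theorem~\ref{Hoffmann-thr} is what closes it.

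A second gap is the subdivision itself in the general case $|B|\leqslant |R|<k|B|$. Recursive ham-sandwich cuts do not yield the claimed cells: bisections suffer parity and rounding drift, and controlling the red-to-blue ratio per cell across many levels is precisely the content of the equitable subdivision theorem of Bespamyatnikh, Kirkpatrick and Snoeyink (Theorem~\ref{eqsubdivision-thr}), which is stated only for exact ratios $|R|=k|B|$. For non-exact ratios you would need either a strengthened subdivision (one blue, \emph{at most} $k$ red per cell) or padding with $k|B|-|R|$ dummy red points; padding creates a new problem, since after deleting dummies some cell may retain an isolated blue point, which violates the no-isolated-vertex hypothesis of Theorem~\ref{Hoffmann-thr}. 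The paper avoids this entirely: it uses the subdivision argument only when $|R|=k|B|$, and it establishes Conjecture~\ref{conj1} for general $|R|$ (indeed the stronger bound $\max\{3,\delta+1\}$ of Theorem~\ref{thr0}) by a completely different recursive construction (Sections~\ref{k+1-section}--\ref{contribution-section}) rather than by subdivision plus augmentation.
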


Kaneko~\cite{Kaneko1998} posed the following sharper conjecture for the case when $k\geqslant 2$.

\begin{conjecture}[Kaneko~\cite{Kaneko1998}]
\label{conj2}
Let $R$ and $B$ be two disjoint sets of points in the plane such that $|B|\leqslant|R|$ and $R\cup B$ is in general position, and let $k=\left\lceil\frac{|R|}{|B|}\right\rceil$. If $k\geqslant2$, then there exists a plane bichromatic tree on $R\cup B$ whose maximum vertex degree is at most $k+1$.
\end{conjecture}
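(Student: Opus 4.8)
The plan is to establish the sharper, tight bound announced in the abstract---a plane spanning tree of \K{R}{B} of maximum degree at most $\max\{3,\lceil(|R|-1)/|B|\rceil+1\}$---since for $k=\lceil|R|/|B|\rceil\ge 2$ this quantity is at most $k+1$, so Conjecture~\ref{conj2} follows immediately. I would argue by induction, driven by a divide-and-conquer on a straight-line cut that preserves the density ratio $|R|/|B|$. It is convenient to normalize the invariant to $|R|\le k\,|B|$, so that $\lceil|R|/|B|\rceil\le k$. The base case is $|B|=1$: joining the unique blue point to all $|R|\le k$ red points gives a planar star of degree at most $k\le k+1$. The inductive step splits the instance into two balanced subinstances, solves each recursively inside disjoint halfplanes, and reconnects the two plane subtrees with a single bichromatic bridge.

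The first ingredient is a \emph{ratio-preserving separator}: when $|B|\ge 2$ and $|R|\le k|B|$, there is a line $\ell$ partitioning $R\cup B$ into parts $P_1=R_1\cup B_1$ and $P_2=R_2\cup B_2$ with $B_i\ne\emptyset$ and $|R_i|\le k|B_i|$ for $i=1,2$. The natural proof is a continuity/sweep argument: translate and rotate a directed line while tracking the integer quantity $|R_{\mathrm{left}}|-k\,|B_{\mathrm{left}}|$; since the global value $|R|-k|B|$ is at most $0$, an intermediate-value (ham-sandwich-type) argument produces a position where neither side exceeds its red budget while each keeps at least one blue point. Recursing on $P_1$ and $P_2$ yields plane trees $T_1,T_2$ of maximum degree at most $k+1$; because $P_1$ and $P_2$ are strictly separated by $\ell$, their convex hulls are disjoint and every edge of a subtree is a chord of its hull, so $T_1$ and $T_2$ are mutually non-crossing.

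The second ingredient is a degree-safe, crossing-free merge. The two subtrees must be joined by one new edge crossing $\ell$, and since each edge is bichromatic it must link a red point of one part to a blue point of the other. To avoid crossings I would route this bridge along the outer tangent between the two hulls, between mutually visible hull vertices of $P_1$ and $P_2$; such a segment lies in the region between the hulls and so meets no edge of $T_1$ or $T_2$ (this is what \textsf{Tangent}, \textsf{UpperTangent}, and \textsf{LowerTangent} are for). For the degree bound I would use a slack count: the blue degrees of a subtree sum to its edge count $|R_i|+|B_i|-1$, which is strictly below $(k+1)|B_i|$ because $|R_i|\le k|B_i|$; hence at least one blue vertex has degree at most $k$, giving one unit of reserve. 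Attaching a red leaf of one part (degree $1\to 2\le k+1$, since $k\ge 1$) to a reserved blue vertex of the other keeps every vertex at degree at most $k+1$.

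The main obstacle, I expect, is enforcing these two requirements \emph{simultaneously}: the spare-capacity vertex must be of the correct color \emph{and} sit at the hull position reachable by the tangent bridge. Counting guarantees that a blue vertex of degree at most $k$ exists somewhere, but not that it faces $\ell$, nor that a red leaf is available on that side of the hull. I would resolve this by strengthening the induction hypothesis so that each recursively built subtree carries a designated \emph{port}---a hull vertex oriented toward the cut that created the subinstance, of degree at most $k$ in the blue case---together with a short case analysis on the colors of the tangent endpoints (the kind of bookkeeping a pairing/refinement lemma would encapsulate). Finally I would verify the low-density corner cases ($k$ small, or $|R|-k|B|$ near its extreme value), where the $\max\{3,\cdot\}$ term in the tight bound absorbs the extra degree charged by a bridge, so that the $k+1$ bound claimed in Conjecture~\ref{conj2} is never exceeded.
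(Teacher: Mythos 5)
Your overall plan---recurse on a balanced straight-line cut and reconnect with a single bichromatic bridge, keeping degree slack at designated ``ports''---has the right ingredients (the slack count, and the recognition that the port's position and color are the crux), but both of its load-bearing claims are unproven, and the second one fails as stated. First, the separator lemma is one-sided. A bichromatic tree also charges every edge to a red endpoint, so a part $P_i$ admits a bichromatic $(k+1)$-tree only if $|B_i|\leqslant k|R_i|+1$; your sweep only enforces $|R_i|\leqslant k|B_i|$. A cut producing, say, one red and five blue points on one side with $k=2$ is perfectly consistent with your lemma, yet the only bichromatic tree on that side is the star centered at the red point, which has degree $5>k+1$. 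So the intermediate-value argument has to control two integer functionals simultaneously (a genuinely ham-sandwich-type statement), not one.

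Second, and more fundamentally, the merge cannot be rescued by the strengthened hypothesis you propose. In the tight case $|R_i|=k|B_i|$, Observation~\ref{obs1} shows that \emph{every} bichromatic $(k+1)$-tree on $P_i$ has exactly one blue vertex of degree $k$ and all other blue vertices of degree exactly $k+1$: there is one unit of blue slack, full stop. Now run your induction: an instance $P$ inherits a designated port $p$ facing its parent's cut, and after you cut $P$ into $P_1\cup P_2$, the part containing $p$ must host \emph{two} special vertices---$p$ itself and an endpoint of the new bridge. If both are blue, this contradicts Observation~\ref{obs1}; so the bridge endpoint on that side must be red, but the colors of the mutually visible hull vertices (the tangent endpoints) across the cut are dictated by the geometry, not chosen by you. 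Resolving exactly this conflict is not deferred bookkeeping---it is the body of the paper's proof, and the paper resolves it by abandoning bridges altogether: the split is made radially \emph{around} the designated port vertex (Lemmas~\ref{convex-3blue} and~\ref{convex-3red} produce cuts with exact counts such as $|R_i|=k|B_i|+1$, not mere inequalities), the subtrees are produced by \procB\ and \procA\ with forced degrees at hull points lying on the cut rays, and they are joined through the port vertex itself, so no separate bridge and no second special vertex is ever needed; the residual color/visibility conflicts are precisely what the chain-and-tangent analysis of Case~1.2 of \procA\ and the dummy-point \passrefine\ machinery of Section~\ref{general-section} exist to handle. Your proposal essentially rediscovers the problem those components solve, but does not solve it.
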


\subsection{Previous Work}
\label{previous-work-section}
Assume $|B|\leqslant |R|$ and let $k=\left\lceil \frac{|R|}{|B|}\right\rceil$. Abellanas et al.~\cite{Abellanas1999} proved that there exists a plane bichromatic tree on $R\cup B$ whose maximum vertex degree is $O(k+\log |B|)$. Kaneko~\cite{Kaneko1998} showed how to compute a plane bichromatic tree of maximum degree $3k$. 

Kaneko~\cite{Kaneko1998} proved Conjecture~\ref{conj1} for the case when $|R|=|B|$, i.e., $k=1$; specifically he showed how to construct a plane bichromatic tree of maximum degree three. In~\cite{Kaneko2003} the authors mentioned that Kaneko proved Conjecture~\ref{conj1} for the case when $|R|\neq |B|$. However, we have not been able to find any written proof for this conjecture. Moreover, Kano~\cite{Kano2015communication} confirmed that he does not remember the proof.

Abellanas et al.~\cite{Abellanas1999} considered the problem of computing a low degree plane bichromatic tree on some restricted point sets. They proved that if $R\cup B$ is in convex position and $|R|=k|B|$, with $k\geqslant 1$, then $R\cup B$ admits a plane bichromatic tree of maximum degree $k+2$. If $R$ and $B$ are linearly separable and $|R| = k|B|$, with $k\geqslant 1$, they proved that $R\cup B$ admits a plane bichromatic tree of maximum degree $k+1$. They also obtained a degree of $k+1$ for the case when $B$ is the convex hull of $R\cup B$ (to be more precise, $B$ is equal to the set of points on the convex hull of $R\cup B$). To the best of our knowledge, neither Conjecture~\ref{conj1} nor Conjecture~\ref{conj2} has been proven for points in general position where $|R|\neq|B|$.

The existence of a spanning tree (not necessarily plane) of low degree in a bipartite graph is also of interest. If $|B|\leqslant|R|\leqslant k|B|+1$, with $k\geqslant 1$, then $\K{R}{B}$ has a spanning $(k+1)$-tree which can be computed as follows. Partition the points of $R$ into non-empty sets $R_1,\dots, R_{|B|}$ each of size at most $k$, except possibly one set which is of size $k+1$; let $R_{1}$ be that set. Let $B=\{b_1,\dots, b_{|B|}\}$. Let $S_1, \dots, S_{|B|}$ be the set of stars obtained by connecting the points in $R_i$ to $b_i$. To obtain a $(k+1)$-tree we connect $S_i$ to $S_{i+1}$ by adding an edge between a red point in $S_i$ to the only blue point in $S_{i+1}$ for $i=1,\dots,|B|-1$. Kano~et~al.~\cite{Kano2015} considered the problem of computing a spanning tree of low degree in a (not necessarily complete) connected bipartite graph $G$ with bipartition $(R,B)$. They showed that if $|B|\leqslant |R|\leqslant k|B|+1$, with $k\geqslant 1$, and $\sigma(G)\geqslant |R|$, then $G$ has a spanning $(k+1)$-tree, where $\sigma(G)$ denotes the minimum degree sum of $k+1$ independent vertices of $G$. 

The problem of computing a plane tree of low degree on multicolored point sets (with more than two colors) is also of interest, see~\cite{Kano2013a, Biniaz2015}. Let $P$ be a set of colored points in the plane in general position. Let $\{P_1, \dots, P_m\}$ be the partition of $P$ where the points of $P_i$ have the same color that is different from the color of the points of $P_j$ for $i\neq j$. Kano et al.~\cite{Kano2013a} showed that if $|P_i|\leqslant \left\lceil \frac{|P|}{2}\right\rceil$, for all $i=1,\dots, m$, then there exists a plane colored $3$-tree on $P$; a colored tree is a tree where the two endpoints of every edge have different colors. Biniaz et al.~\cite{Biniaz2015} presented algorithms for computing a plane colored $3$-tree when $P$ is in the interior of a simple polygon and every edge is the shortest path between its two endpoints in the polygon.

A related problem is the non-crossing embedding of a given tree into a given point set in the plane. 
It is known that every tree $T$ with $n$ nodes can be embedded into any set $P$ of $n$ points in the plane in general position~(see Lemma 14.7 in~\cite{Pach1995}). If $T$ is rooted at a node $r$, then the rooted-tree embedding problem asks if $T$ can be embedded in $P$ with $r$ at a specific point $p\in P$. This problem which was originally posed by Perles, is answered in the affirmative by Ikebe et al.~\cite{Ikebe1994}. A related result by A. Tamura and Y. Tamura~\cite{Tamura1992} is that, given a point set $P=\{p_1,\dots, p_n\}$ in the plane in general position and a sequence $(d_1,\dots,d_n)$ of positive integers with $\sum_{i=1}^{n} d_i=2n-2$, then $P$ admits an embedding of some tree $T$ such that the degree of $p_i$ is $d_i$. Optimal algorithms for solving the above problems can be found in~\cite{Bose1997}.

Another related problem is to compute a plane bichromatic Euclidean minimum spanning tree on $R\cup B$. This problem is NP-hard~\cite{Borgelt2009}. The best polynomial-time algorithm known so far for this problem, has approximation factor $O(\sqrt{n})$, where $n$ is the total number of points~\cite{Borgelt2009}.   

\subsection{Our Results}
\label{our-results-section}
In Section~\ref{k+2-section} we prove Conjecture~\ref{conj1} for the case when $|R|=k|B|$, with $k\geqslant 1$. The proof\textemdash which is very simple\textemdash is based on a result of Bespamyatnikh et al.~\cite{Bespamyatnikh2000} and a recent result of Hoffmann et al.~\cite{Hoffmann2014}. The core of our contribution is in Section~\ref{k+1-section}, where we partially prove Conjecture~\ref{conj2}: If $|R|=k|B|$, with $k\geqslant 2$, and $R\cup B$ is in general position, then there exists a plane bichromatic tree on $R\cup B$ whose maximum degree is $k+1$. We present a constructive proof for obtaining such a tree. We prove the full Conjecture~\ref{conj2} in Section~\ref{general-section}; the proof is again constructive and is based on the algorithm of Section~\ref{k+1-section}. Then, in Section~\ref{contribution-section} we combine the results of Sections~\ref{k+1-section} and~\ref{general-section} to show the following theorem that is even sharper than Conjecture~\ref{conj2}.

\begin{theorem}
\label{thr0}
Let $R$ and $B$ be two disjoint sets of points in the plane such that $|B|\leqslant|R|$ and $R\cup B$ is in general position, and let $\delta=\left\lceil \frac{|R|-1}{|B|}\right\rceil$. Then, there exists a plane bichromatic tree on $R\cup B$ whose maximum vertex degree is at most $\max\{3, \delta + 1\}$; this is the best possible upper bound on the maximum degree.
\end{theorem}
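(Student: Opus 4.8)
The plan is to derive Theorem~\ref{thr0} as a case analysis that dispatches each instance to one of the three constructions already established: the degree-$3$ tree for $|R|=|B|$ from Section~\ref{k+2-section} (Conjecture~\ref{conj1} with $k=1$), the full Conjecture~\ref{conj2} from Section~\ref{general-section} (degree at most $k+1$ whenever $k=\lceil|R|/|B|\rceil\geq2$), and the degree-$(k+1)$ tree for $|R|=k|B|$ from Section~\ref{k+1-section}. The first step is to relate $\delta$ to $k$. A short computation on remainders shows that $\delta=k-1$ exactly when $|B|$ divides $|R|-1$, and $\delta=k$ otherwise; equivalently, $\delta+1=k$ precisely when $|R|\equiv1\pmod{|B|}$, and $\delta+1=k+1$ in every other case. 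Moreover $\delta+1\geq3$ iff $|R|\geq|B|+2$, so $\max\{3,\delta+1\}$ equals $3$ for the two small regimes $|R|=|B|$ and $|R|=|B|+1$, and equals $\delta+1$ once $|R|\geq|B|+2$.

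With this dictionary in hand, most cases are immediate. When $|R|=|B|$ the target is $3$, delivered verbatim by the Section~\ref{k+2-section} construction. When $|R|=|B|+1$ we have $k=2$, the target is again $3$, and Section~\ref{general-section} produces a tree of degree at most $k+1=3$. When $|R|\geq|B|+2$ and $|R|\not\equiv1\pmod{|B|}$ we have $\delta=k\geq2$ and the target is $\delta+1=k+1$, so Section~\ref{general-section} applies directly. The only remaining regime is $|R|\equiv1\pmod{|B|}$ with $|R|\geq|B|+2$; writing $|R|=a|B|+1$ this means $a\geq2$, the target is $\delta+1=a+1=k$, whereas Conjecture~\ref{conj2} only guarantees $k+1$. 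This regime is where Theorem~\ref{thr0} is genuinely sharper, and it is the crux of the proof.

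To gain the extra unit of degree in this last regime, I would remove a single red point $p$ from $R$, leaving $R'=R\setminus\{p\}$ with $|R'|=a|B|$ an exact multiple of $|B|$ and $a\geq2$. The Section~\ref{k+1-section} construction then yields a plane bichromatic tree $T'$ on $R'\cup B$ of maximum degree at most $a+1$. Since each edge of a bichromatic tree has exactly one blue endpoint, the blue degrees in $T'$ sum to $|R'|+|B|-1=a|B|+|B|-1$, which is strictly less than $(a+1)|B|$; hence some blue point $b^\ast$ has $T'$-degree at most $a$. It then suffices to re-insert $p$ as a leaf by joining it to such a $b^\ast$ with a segment that crosses no edge of $T'$: $p$ attains degree $1$ and $b^\ast$ attains degree at most $a+1=k$, so the resulting spanning tree of \K{R}{B} has maximum degree at most $k=\delta+1$, as required.

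The main obstacle is precisely this crossing-free re-insertion, since the low-degree witness $b^\ast$ need not be visible from an arbitrarily chosen $p$. I would resolve it by choosing $p$ to be a red vertex of the convex hull of $R\cup B$ and exploiting that $p$ then lies outside the convex hull of $R'\cup B$, so that the open segment from $p$ to any hull vertex of $R'\cup B$ that $p$ can see stays outside that hull and therefore avoids $T'$; the work is to guarantee that at least one such visible attachment point is blue and of degree at most $a$, if necessary by a local modification of $T'$ near the attachment or by selecting $p$ and the Section~\ref{k+1-section} partition jointly so that a low-degree blue point is exposed on the relevant hull arc. Finally, the matching lower bound is already in hand: the degree-sum argument of the introduction forces some blue vertex of degree at least $\delta+1$ in every bichromatic tree, and the path-free configurations for $|R|=|B|$ and $|R|=|B|+1$ force degree at least $3$ there, so $\max\{3,\delta+1\}$ cannot be improved.
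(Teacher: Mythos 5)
Your reduction of Theorem~\ref{thr0} to a case analysis is sound in its easy cases, and there it matches the paper: the regimes $|R|=|B|$, $|R|=|B|+1$, and $|R|\not\equiv 1 \pmod{|B|}$ with $|R|\geqslant |B|+2$ are dispatched exactly as the paper does, via the $k=1$ results and Conjecture~\ref{conj2}. You have also correctly isolated the crux: $|R|=a|B|+1$ with $a\geqslant 2$, where the bound $\delta+1=a+1$ is one better than what Conjecture~\ref{conj2} gives. However, your treatment of this crux case has a genuine gap, and you essentially admit it. The remove-and-reinsert strategy fails for two concrete reasons. First, by Observation~\ref{obs1}, in any plane bichromatic $(a+1)$-tree $T'$ on $R'\cup B$ with $|R'|=a|B|$, \emph{exactly one} blue point has degree $a$ and every other blue point has degree $a+1$; so there is a unique legal attachment point $b^\ast$ for $p$, and nothing in the black-box statement of Theorem~\ref{thr2} lets you control where it lies or whether it is visible from $p$. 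Second, your fix of taking $p$ to be a red vertex of $\CH{R\cup B}$ is not always available: all hull vertices of $R\cup B$ may be blue, and even when a red hull vertex $p$ exists, every vertex of $\CH{R'\cup B}$ visible from $p$ may be red, in which case the construction behind Theorem~\ref{thr2} places the unique degree-$a$ blue point at a location dictated by Lemma~\ref{convex-3red}, possibly deep inside the hull and blocked from $p$. The closing suggestions (``local modification near the attachment'' or ``selecting $p$ and the partition jointly'') are precisely the missing content, not a proof of it.

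The paper avoids this entirely by opening the black box of Section~\ref{k+1-section}: the procedure $\procB$ is tailor-made for input $|R|=k'|B|+1$ with a red point $r$ on $\CH{R\cup B}$, and returns a plane bichromatic $(k'+1)$-tree in which $\dgT{T}{r}\in\{1,2\}$ --- so when a red hull vertex exists, one call to $\procB(R,B,r)$ finishes the crux case with no deletion or re-insertion. For the remaining situation, in which every hull vertex is blue, the paper proves a new splitting lemma (Lemma~\ref{convex-3blue-2}, a variant of Lemma~\ref{convex-3blue}): around a blue hull vertex $b$ one finds $k'+1$ consecutive red points in the radial ordering that cut the instance into two subinstances with $|R_i|=k'|B_i|+1$, runs $\procB$ on each, and joins the two trees through the star at $b$, giving $b$ degree exactly $k'+1$. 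If you want to salvage your approach, you would need to prove that one can always produce an $(a+1)$-tree on $R'\cup B$ whose unique degree-$a$ blue vertex is visible from $p$ (and that a suitable $p$ exists); proving that is at least as hard as what $\procB$ and Lemma~\ref{convex-3blue-2} already accomplish.
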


As we will see, the proof of Theorem~\ref{thr0} is simpler for $\delta\geqslant 4$. However, for smaller values of $\delta$, the proof is much more involved.

\section{Plane Bichromatic $(k+2)$-trees}
\label{k+2-section}
In this section we show that if $|R|=k |B|$, with $k\geqslant 1$, then there exists a plane bichromatic tree on $R\cup B$ whose maximum vertex degree is at most $k+2$. We make use of the following two theorems.
\begin{theorem}[Bespamyatnikh, Kirkpatrick, and Snoeyink~\cite{Bespamyatnikh2000}]
\label{eqsubdivision-thr}
Let $g\geqslant 1$ be an integer. Given $gn$ red and $gm$ blue points in the plane in general
position, there exists a subdivision of the plane into $g$ convex regions each of which contains $n$ red and $m$ blue points.
\end{theorem}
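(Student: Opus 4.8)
The plan is to prove the statement by induction on $g$, after strengthening it so the induction carries through cleanly: instead of subdividing the whole plane, I would show that \emph{every convex region $C$ of the plane containing exactly $pn$ red and $pm$ blue points, for any positive integer $p$, can be partitioned into $p$ convex subregions each containing $n$ red and $m$ blue points}. The base case $p=1$ is immediate, since $C$ itself is the single cell. For the inductive step I want a single straight line $\ell$ that splits $C$ into two pieces $C_1$ and $C_2$ (automatically convex, since $C$ is convex and $\ell$ is a line) such that $C_1$ contains $an$ red and $am$ blue points and $C_2$ contains $(p-a)n$ red and $(p-a)m$ blue points for some integer $a$ with $1\le a\le p-1$. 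Given such a cut, the inductive hypothesis applied to $C_1$ and to $C_2$ yields $a$ and $p-a$ convex cells respectively, whose union is the desired partition of $C$ into $p$ cells. I would also note that a straight line is in fact \emph{forced}: if a single curve partitions a convex region into two convex pieces, that curve must be a line, so the whole proof reduces to finding a good line cut at each step.

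Thus everything reduces to a \emph{balanced-cut lemma}: the points in $C$ admit a line $\ell$ realizing a proportional split $(an,am)$ versus $((p-a)n,(p-a)m)$ for some $1\le a\le p-1$. When $p$ is even I would obtain this directly from the ham-sandwich theorem, whose cut simultaneously bisects both color classes and hence gives $a=p/2$. The delicate case is odd $p$, where the target split is not a bisection. To reach an arbitrary ratio I would set up a continuity argument aimed at the simplest target $a=1$: parametrize directed lines by their direction $\theta$ and, for each $\theta$, look at the $k:=n+m$ points of $C$ most extreme in direction $\theta$, which are separable from the rest by a line orthogonal to $\theta$. Writing $r(\theta)$ for the number of red points among them, $r$ changes by single steps as $\theta$ rotates, and I would try to show, by an intermediate-value/parity argument in the spirit of the ham-sandwich proof, that $r(\theta)=n$ for some $\theta$; such a $\theta$ cuts off exactly $n$ red and $m$ blue, after which I recurse. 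General position lets me sidestep ties in the projection and lines through points by slightly perturbing $\theta$.

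The main obstacle is exactly this balanced-cut lemma for non-bisecting ratios. A single line gives only a one-parameter sweep, and a naive translation in a fixed direction moves the red and blue counts out of step, so there is no monotonicity to exploit; the argument must instead use \emph{rotation} and a topological (degree or parity) obstruction to force $r(\theta)$ to attain the value $n$, i.e.\ to rule out an extreme-prefix coloring that avoids every balanced configuration. Guaranteeing that such a line always exists is the crux of the whole proof. As a safety net, the theorem is already straightforward when $g$ is a power of two, via $\log_2 g$ rounds of ham-sandwich bisection, so the genuine content is extracting an equitable line cut when $g$ has odd factors; if single lines should turn out to be insufficient for some configurations, the fallback is to invoke the full generalized-ham-sandwich machinery to produce the balanced cut while keeping both sides convex.
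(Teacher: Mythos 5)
First, a point of comparison: the paper under review does not prove this statement at all; it imports it as a black box from Bespamyatnikh, Kirkpatrick and Snoeyink~\cite{Bespamyatnikh2000}. So the only question is whether your argument would constitute a valid proof of that cited result, and it would not: the ``balanced-cut lemma'' on which your entire induction rests is false, and not for a reason that perturbation or a cleverer sweep can fix. Take $p\geqslant 3$ odd, place the $pn$ red points essentially equally spaced on a unit circle, and place the $pm$ blue points in a cluster of tiny diameter $\epsilon$ at the centre (generically perturbed for general position). The red points on one side of any line form a contiguous arc; if that side contains exactly $an$ red points with $1\leqslant a\leqslant p-1$, then, since $a/p$ is bounded away from $1/2$ when $p$ is odd, the line lies at distance at least roughly $\sin\bigl(\pi/(2p)\bigr)$ from the centre. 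Hence the blue cluster lies entirely on one side of every such line, and the number of blue points on the $an$-red side is $0$ or $pm$, never $am$. So no proportional line cut exists for \emph{any} $a$, not just $a=1$. Your rotating-cap function $r(\theta)$ also provably never attains $n$ here: the $n+m$ extreme points in any direction contain more than $n$ red points (either all $n+m$ of them are red, or about $3n/2$ are). The even case via ham-sandwich is fine, and you correctly isolated odd $p$ as the crux, but the crux is a false statement, so no intermediate-value or parity argument can establish it.

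The failure is structural rather than technical, and your own observation makes this precise: two convex pieces partitioning a convex region must be separated by a straight line, so any partition produced by your recursion is a hierarchy of line cuts, and the configuration above admits \emph{no} equitable subdivision of that form, even though equitable subdivisions exist (three $120^{\circ}$ wedges around the centre). Any line with whole cells of a subdivision on each side would be a proportional cut, which we just ruled out. This is exactly why the cited proof of Bespamyatnikh, Kirkpatrick and Snoeyink (and, independently, Ito--Uehara--Yokoyama and Sakai) establishes a dichotomy: either an equitable line cut exists, and one recurses on the two sides as you propose, or else there exists an equitable \emph{3-fan} --- three rays from a common apex cutting the plane into three convex wedges containing $g_in$ red and $g_im$ blue points with $g_1+g_2+g_3=g$ and each $g_i\geqslant 1$ --- and one recurses on three parts. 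Proving the 3-fan alternative is the hard topological core of the theorem, and nothing in your proposal supplies it; the ``fallback'' you invoke (generalized ham-sandwich machinery keeping both sides convex) is vacuous, since a two-piece convex cut is by your own argument a line cut, and those are precisely what fail.
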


\begin{theorem}[Hoffmann and T{\'{o}}th~\cite{Hoffmann2014}]
\label{Hoffmann-thr}
Every disconnected plane bichromatic geometric graph with no isolated vertices can be augmented (by adding edges) into a connected plane bichromatic geometric graph such that the degree of each vertex increases by at most two.
\end{theorem}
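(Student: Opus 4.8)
The plan is to treat the connected components $C_1,\dots,C_m$ (with $m\geqslant 2$, since $G$ is disconnected) as super-nodes and to add $m-1$ \emph{bichromatic}, pairwise non-crossing segments that merge them into a tree, while charging each added endpoint against a per-vertex budget of two. First I would record two structural facts that the hypotheses supply for free. Because $G$ has no isolated vertex, every component contains at least one edge; because every edge of a bichromatic graph joins a red and a blue point, each component contains vertices of \emph{both} colors, and along any facial walk of a component the colors alternate (consecutive edges of the walk share a vertex and each edge is red--blue). Hence every component exposes both a red and a blue vertex on the boundary of each face of $G$ it touches, and this is precisely what lets me find an endpoint of whatever color I need wherever I want to attach a new segment.

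Next I would build connectivity through the faces of $G$. Form the auxiliary graph $H$ whose nodes are the components and in which two components are adjacent whenever they appear on the boundary of a common face of $G$ (the outer face included); since the plane is connected, $H$ is connected. For a chosen set of $m-1$ tree edges over the components, each such edge joins two components sharing a face $f$, and I would route a straight segment inside $f$ from a vertex of one component to an oppositely colored vertex of the other. The alternating-color property provides the endpoints, and staying inside a single face automatically prevents the new segment from crossing $G$. The secondary planarity issue\textemdash that the $m-1$ new segments must also avoid one another\textemdash I would dispatch by processing the segments assigned to a fixed face in the cyclic order of their attachment points around that face, drawing them inside $f$ as a non-crossing fan or path; this is the routine part.

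The genuine obstacle is the degree constraint: each vertex may be an endpoint of at most two added segments, yet a small component\textemdash in the extreme a single red--blue edge\textemdash offers only two vertices of each color, while a component can border many faces and thus be asked to carry many connecting edges. The core of the proof is therefore to choose the component tree \emph{and} the endpoint assignment so that no vertex is overloaded, which rules out taking an arbitrary spanning tree of $H$. I would instead use an inductive peeling scheme: repeatedly identify a component that can be attached to the rest by a single non-crossing bichromatic segment at vertices still holding budget, attach it, and recurse on the $m-1$ remaining super-nodes, maintaining the invariant that every surviving component still exposes budgeted vertices of both colors on its outer boundary. Certifying that this invariant survives each merge\textemdash in particular that small or single-edge components are always attached as leaves and that the color demands of incident attachments can be met by \emph{distinct} budgeted vertices\textemdash is where the real work lies; I expect a Hall-type counting argument, balancing the number of required attachments against the supply of budgeted red and blue boundary vertices, to be the crux, together with a separate treatment of components reachable only through interior (nested) faces.
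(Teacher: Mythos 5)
The paper does not prove this statement at all: Theorem~\ref{Hoffmann-thr} is imported verbatim from Hoffmann and T\'oth~\cite{Hoffmann2014} and used as a black box in the proof of Theorem~\ref{th1}, so there is no in-paper argument to compare yours against. Your attempt must therefore be judged as a proof of the Hoffmann--T\'oth theorem itself, and as such it has two genuine gaps, both at exactly the places where that theorem's real work lies.

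First, your routing step silently assumes that whenever two components share a face $f$, some vertex of one sees an \emph{oppositely colored} vertex of the other along a straight segment inside $f$. Faces of a plane geometric graph are in general non-convex, other components may lie inside $f$, and while a standard tangent or closest-pair argument yields \emph{some} mutually visible vertex pair between two components, nothing forces that pair to be bichromatic: a vertex of one component may see only same-colored vertices (or only edge interiors) of the other, for instance through a narrow pocket. Color alternation along facial walks guarantees that both colors \emph{occur} on the boundary, not that a vertex of the color you need is \emph{visible} from where you need it; arranging a bichromatic non-crossing connection is a central difficulty of the actual proof, not a freebie. Your ``non-crossing fan'' remark has the same defect: straight segments with prescribed endpoints inside a non-convex face need not be pairwise non-crossing merely because their attachment points are processed in cyclic order around the face. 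Second, you correctly identify the degree budget as the crux, but then only name the tools you expect to use (``inductive peeling,'' ``a Hall-type counting argument'') without formulating the invariant, proving that it survives a merge, or showing that a component attachable at budgeted vertices of the required colors always exists. Since the bound of two is tight---the paper itself exhibits a configuration in which a degree increase of two is unavoidable---any such counting argument has no slack and cannot be waved through. As written, the proposal is a plausible plan whose two load-bearing steps are both unproven.
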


\begin{wrapfigure}{r}{0.21\textwidth}
  \begin{center}
\vspace{-20pt}
\includegraphics[width=.18\textwidth]{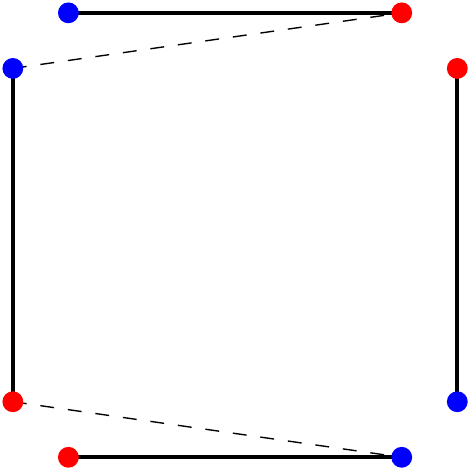}
  \end{center}
\vspace{-20pt}
\end{wrapfigure}
The subdivision in Theorem~\ref{eqsubdivision-thr} is known as an ``Equitable Subdivision'' of the plane. The result of Theorem~\ref{Hoffmann-thr} is optimal, as there are disconnected plane bichromatic geometric graphs with no isolated vertices that cannot be augmented into a connected plane bichromatic geometric graph while increasing each vertex degree by less than two; see the example in the figure on the right.  
\begin{theorem}
\label{th1}
Let $R$ and $B$ be two disjoint sets of points in the plane such that $|R|=k|B|$, with $k\geqslant 1$, and $R\cup B$ is in general position. Then, there exists a plane bichromatic tree on $R\cup B$ whose maximum vertex degree is at most $k+2$.
\end{theorem}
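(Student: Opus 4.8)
The plan is to combine the two cited theorems in the natural way: use the equitable subdivision to cut $R\cup B$ into blocks that each contain a single blue point together with its private share of red points, draw a star inside each block, and then glue the stars together with Theorem~\ref{Hoffmann-thr}.

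Concretely, first I would apply Theorem~\ref{eqsubdivision-thr} with $g=|B|$, $n=k$, and $m=1$; since $|R|=k|B|$, the hypotheses $gn=|R|$ and $gm=|B|$ hold, and we obtain a subdivision of the plane into $|B|$ convex regions, each containing exactly $k$ red points and exactly one blue point. Within each region, connect its blue point to all $k$ red points lying in the same region. This produces $|B|$ disjoint stars. Each individual star is plane because all its edges share the blue center and therefore meet only at that center (general position rules out collinear overlaps), and edges from two different stars cannot cross because each star lies inside its own convex region and the regions have pairwise disjoint interiors (convexity guarantees every segment stays inside the region of its endpoints). Thus the union $G$ of these stars is a plane bichromatic graph in which every blue point has degree $k$ and every red point has degree $1$; in particular, since $k\geq 1$, $G$ has no isolated vertex.

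If $|B|=1$ then $G$ is already a single star, hence a spanning tree of maximum degree $k\leq k+2$, and we are done. Otherwise $G$ is disconnected, so I would invoke Theorem~\ref{Hoffmann-thr} to augment $G$ into a connected plane bichromatic graph $G'$ on the same vertex set in which every degree has grown by at most two. Consequently the blue centers now have degree at most $k+2$ and the former red leaves have degree at most $3\leq k+2$, so the maximum degree of $G'$ is at most $k+2$. Finally, extract any spanning tree $T$ of $G'$: deleting edges preserves planarity and the bichromatic property, keeps all vertices spanned, and can only lower degrees, so $T$ is a plane bichromatic spanning tree of maximum degree at most $k+2$, as required.

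The point to get right — more a verification than a genuine obstacle — is confirming that the stars form a \emph{plane} graph and that Theorem~\ref{Hoffmann-thr} is applicable. Planarity hinges on the convexity of the equitable regions (so each segment stays within one region) together with general position (so edges of a single star meet only at the center), and the applicability of Theorem~\ref{Hoffmann-thr} hinges on $k\geq 1$ leaving no isolated vertices. Once these are checked, the degree bookkeeping is immediate: $k$ from the stars plus the $+2$ from the augmentation.
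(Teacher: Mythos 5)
Your proposal is correct and follows essentially the same route as the paper: equitable subdivision with $g=|B|$ into convex regions each holding one blue and $k$ red points, stars inside each region, and the Hoffmann--T\'{o}th augmentation to connect them with degree increase at most two. Your extra care (the $|B|=1$ base case, checking planarity of the star union, and extracting a spanning tree from the augmented connected graph) only fills in details the paper leaves implicit.
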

\begin{proof}
By applying Theorem~\ref{eqsubdivision-thr} with $g=|B|$ we divide the plane into $|B|$ convex regions each of which contains one blue point and $k$ red points.
In each region we obtain a star by connecting the $k$ red points in that region to its only blue point. Each star is a bichromatic tree of maximum degree $k$. By Theorem~\ref{Hoffmann-thr} we connect the stars to form a plane bichromatic tree of degree at most $k+2$.
\end{proof}

\section{Plane Bichromatic $(k+1)$-trees}
\label{k+1-section}
In this section we prove Conjecture~\ref{conj2} for the case when $|R|=k|B|$ and $k\geqslant 2$:

\begin{theorem}
\label{thr2}
Let $R$ and $B$ be two disjoint sets of points in the plane, such that $|R|=k|B|$, with $k\geqslant 2$, and $R\cup B$ is in general position. Then, there exists a plane bichromatic tree on
$R\cup B$ whose maximum vertex degree is at most $k + 1$.
\end{theorem}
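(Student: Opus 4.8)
The plan is to reduce the construction to a family of stars together with a careful merging step. Recall from the lower-bound discussion that when $|R|=k|B|$ every bichromatic tree has a blue point of degree at least $k+1-1/|B|$, so the target degree $k+1$ leaves each blue point a budget of exactly one edge beyond a $k$-star. I would therefore first invoke the equitable subdivision of Theorem~\ref{eqsubdivision-thr} with $g=|B|$ to partition the plane into $|B|$ convex cells, each containing one blue point and $k$ red points, and form in each cell the star joining its blue point to its $k$ red points; every such star is plane, bichromatic, and has blue-degree exactly $k$. The whole difficulty is then to join these $|B|$ stars into a single plane bichromatic tree while spending at most one extra edge at each blue point, i.e.\ \emph{without} invoking the generic augmentation of Theorem~\ref{Hoffmann-thr}, which costs two at a vertex and only yields the $k+2$ bound of Theorem~\ref{th1}.

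For the merging step I would work with the adjacency graph $H$ of the convex cells, whose vertices are the cells and whose edges join cells sharing a boundary segment; $H$ is planar. Ideally I would find a Hamiltonian path $C_1,\dots,C_{|B|}$ in $H$ and connect the stars along it, routing for each $i$ a single edge from a red point of $C_i$ lying near the shared boundary to the blue point of $C_{i+1}$. Along such a path each blue point receives at most one connecting edge and so reaches degree at most $k+1$, while each red point used as a connector reaches degree $2$; the edges can be pushed against the shared boundaries so as to cross neither the stars nor one another. If $H$ admits no Hamiltonian path, I would instead root a spanning tree of $H$ and orient every tree edge from parent to child, again routing it from a red point of the parent cell to the blue point of the child cell through their common boundary, so that every non-root blue point rises to exactly $k+1$ and the root blue point stays at $k$.

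An alternative, and perhaps more robust, formulation is an induction on $|B|$ carried out with a strengthened hypothesis: besides producing a plane bichromatic $(k+1)$-tree, the construction should also expose a low-degree \emph{port} on the convex hull of the current point set. In the inductive step I would peel off one blue point $b$ together with $k$ red points that can be separated from the remaining $k(|B|-1)$ red and $|B|-1$ blue points by a convex boundary across which $b$ sees the rest. I would build the star at $b$ (blue-degree $k$), recurse on the remaining points to obtain a tree $T'$ of maximum degree $k+1$ together with a hull red point $r'$ of degree at most $k$ visible from $b$, and finally add the single edge $br'$, which lifts $b$ to degree $k+1$ and $r'$ by one, leaving both within budget.

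I expect the main obstacle to be geometric rather than combinatorial. First, no straight cut can in general isolate exactly one blue point together with exactly $k$ red points, which is precisely why the partition must rely on the curved convex subdivision of Theorem~\ref{eqsubdivision-thr}; but that flexibility comes at the price of having little control over the cell adjacencies, so neither the Hamiltonicity of $H$ nor the clean separability needed in the peeling argument is automatic. Second, and most delicately, while routing the connecting edges one must keep every red point's degree within $k+1$: in the rooted-tree version this amounts to bounding the number of connecting edges that emanate from a single red point of a parent cell, i.e.\ controlling the branching and the angular assignment of children to red ports. It is in resolving these issues simultaneously\textemdash exact colour counts, planarity of the connectors, and bounded red-degree\textemdash that I anticipate the involved case analysis foreshadowed by the authors.
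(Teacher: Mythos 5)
Your proposal is a plan rather than a proof: both routes you sketch stall exactly at the step you yourself flag as the obstacle, and that step is the entire content of the theorem. In the first route (equitable subdivision into $|B|$ cells, one star per cell, then one connector per spanning-tree edge of the cell-adjacency graph), the bookkeeping at blue points is fine, but the planarity of the connectors cannot be obtained by ``pushing edges against the shared boundaries'': the red points are part of the input and need not lie anywhere near a cell boundary, so the straight segment from a red point of the parent cell to the blue point of the child cell can cross the parent's own star edges, can leave the union of the two cells and cut through the stars of intermediate cells, and distinct connectors can cross one another; moreover a cell can have many children in the spanning tree, so bounding red degrees requires controlling both the branching of the spanning tree and the assignment of children to red ports, which you do not do. Note also that the paper's Section~\ref{k+2-section} shows the generic version of this plan is a dead end for the $k+1$ bound: stars plus augmentation give $k+2$ via Theorem~\ref{Hoffmann-thr}, and the ``$+2$'' there is optimal, so any merging argument must choose the partition and the connectors \emph{jointly} --- which is what the actual proof does and your sketch does not.

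The paper's proof in fact never uses the equitable subdivision for this theorem. It proves two counting lemmas (Lemmas~\ref{convex-3blue} and~\ref{convex-3red}): around a blue (resp.\ red) convex-hull vertex, a potential function that changes by $+k$ at blue points and $-1$ at red points along the radial ordering produces consecutive red points (resp.\ one or two blue points) whose supporting lines split the remaining points into parts with \emph{exact} counts of the form $k|B_i|+1$, $k|B_i|$, etc. This yields two mutually recursive procedures: $\procA(R,B,b)$ returns a plane $(k+1)$-tree in which a designated blue hull point $b$ has degree exactly $|R|-k(|B|-1)$, and $\procB(R,B,r)$ returns one in which a designated red hull point $r$ has degree $1$ or $2$, with all red degrees at most $3$. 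Your second sketch (``strengthened induction exposing a hull port'') is the right instinct but is missing precisely these ingredients: the port must be specifiable \emph{in advance} as an input to the recursion, since it is the parent level that decides which point must stay low-degree; it must be guaranteed to lie on the hull of the subproblem; and the subproblem sizes after the split must come out exactly right for the degree budget. All three are what the radial-ordering lemmas deliver, and none of them follows from peeling off ``one blue point plus $k$ red points separated by a convex boundary,'' for which you give no existence argument, nor any argument that the port produced by the recursive call is visible from $b$.
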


Note that any bichromatic tree on $R\cup B$ has $|B|+|R|-1=(k+1)|B|-1$ edges. Since each edge is incident to exactly one blue point, the sum of the degrees of the blue points is $(k+1)|B|-1$. This implies the following observation:

\begin{observation}
 \label{obs1}
Let $R$ and $B$ be two disjoint sets of points in the plane such that $|R|=k|B|$, with $k\geqslant 1$ is an integer. Then, in any bichromatic $(k+1)$-tree on $R\cup B$, one point of $B$ has degree $k$ and each other point of $B$ has degree $k+1$.
\end{observation}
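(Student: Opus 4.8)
The plan is to turn this into a pure counting statement about the degree sequence of the blue points, using the edge count that the paragraph preceding the observation already records. First I would recall that every spanning tree on $R\cup B$ has exactly $|R|+|B|-1=(k+1)|B|-1$ edges, and that since $\K{R}{B}$ is bipartite each edge has exactly one blue endpoint. Hence, writing $d_1,\dots,d_{|B|}$ for the degrees of the blue points in a given bichromatic $(k+1)$-tree, the sum $\sum_{i=1}^{|B|} d_i$ counts each tree edge exactly once and therefore equals $(k+1)|B|-1$.

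Next I would bring in the degree constraint. By definition of a $(k+1)$-tree, every vertex\textemdash in particular every blue point\textemdash has degree at most $k+1$, so $d_i\leqslant k+1$ for all $i$. Setting the \emph{deficiency} $e_i:=(k+1)-d_i\geqslant 0$ and summing over the $|B|$ blue points gives $\sum_{i=1}^{|B|} e_i=(k+1)|B|-\sum_{i=1}^{|B|} d_i=1$. Since the $e_i$ are nonnegative integers that sum to $1$, exactly one of them equals $1$ and the rest are $0$; equivalently, exactly one blue point has degree $k$ and every other blue point has degree $k+1$, which is the claim.

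There is essentially no geometric content here and hence no real obstacle: the statement is a consequence of the handshake-type edge count together with the integrality of the degrees and the uniform upper bound $k+1$. The only point worth stating carefully is why the deficiencies are nonnegative integers summing to exactly one\textemdash this uses both that the tree is spanning (which fixes the total edge count) and that it is a $(k+1)$-tree (which fixes the per-vertex ceiling). I would not need connectivity beyond the edge count, nor any property of the planar embedding, so the observation in fact holds for any (not necessarily plane) bichromatic $(k+1)$-tree on $R\cup B$.
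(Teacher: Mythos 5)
Your proof is correct and follows essentially the same route as the paper: counting the $(k+1)|B|-1$ tree edges, noting each edge has exactly one blue endpoint, and concluding from the degree cap $k+1$ that exactly one blue point falls short by one. The paper leaves the final deficiency step implicit, whereas you spell it out; there is no substantive difference.
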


In order to prove~Theorem~\ref{thr2} we provide some notation and definitions. Let $P$ be a set of points in the plane. We denote by $\CH{P}$ the convex hull of $P$. For two points $p$ and $q$ in the plane, we denote by $(p,q)$ the line segment whose endpoints are $p$ and $q$. Moreover, we denote by $\ell(p,q)$ the line passing through $p$ and $q$. For a node $p$ in a tree $T$ we denote by $\dgT{T}{p}$ the degree of $p$ in $T$. Let $p$ be a vertex of $\CH{P}$. The {\em radial ordering} of $P-\{p\}$ around $p$ is obtained as follows. Let $p_1$ and $p_2$ be the two vertices of $\CH{P}$ adjacent to $p$ such that the clockwise angle $\angle p_1pp_2$ is less than $\pi$. For each point $q$ in $P-\{p\}$, define its angle around $p$\textemdash with respect to $p_1$\textemdash to be the clockwise angle $\angle p_1pq$. Then the desired radial ordering is obtained by ordering the points in $P-\{p\}$ by increasing angle around $p$.

We start by proving two lemmas that play an important role in the proof of Theorem~\ref{thr2}.

\begin{figure}[htb]
  \centering
  \includegraphics[width=.5\columnwidth]{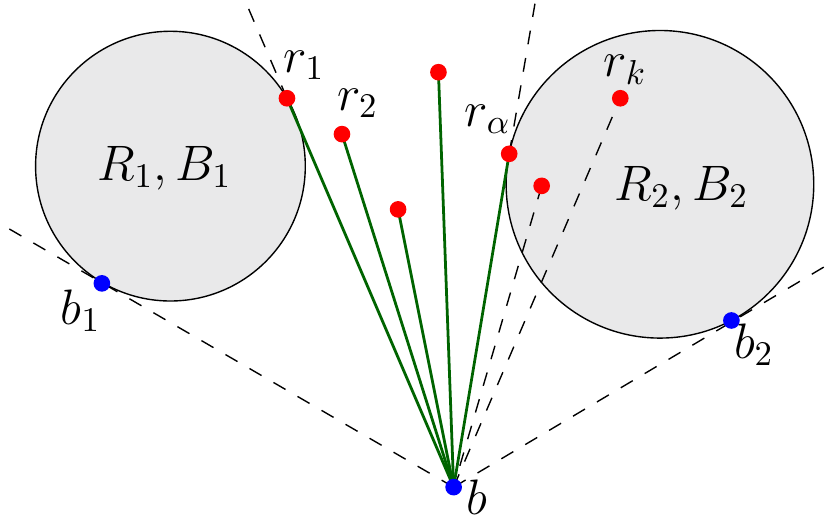}
 \caption{$k$ consecutive red points in the radial ordering of the points around $b$.}
  \label{convex-3blue-fig}
\end{figure}

\begin{lemma}
\label{convex-3blue}
Let $R$ and $B$ be two sets of red and blue points in the plane, respectively, such that $|B|\geqslant 1$, $k(|B|-1)< |R|\leqslant k|B|$, with $k\geqslant 2$, and $R\cup B$ is in general position. Let $b_1,b,b_2$ be blue points that are counter clockwise consecutive on $\CH{R\cup B}$. Then, in the radial ordering of $R\cup B-\{b\}$ around $b$, there are $\alpha=|R|-k(|B|-1)$ consecutive red points, $r_1,\dots,r_\alpha$, such that $|R_1|=k|B_1|+1$ and $|R_2|=k|B_2|+1$, where $R_1$ $($resp. $B_1$$)$ is the set of red points $($resp. blue points$)$ of $R\cup B-\{b\}$ lying on or to the left of $\ell(b,r_1)$, and $R_2$ $($resp. $B_2$$)$ is the set of red points $($resp. blue points$)$ of $R\cup B-\{b\}$ lying on or to the right of $\ell(b,r_\alpha)$.
\end{lemma}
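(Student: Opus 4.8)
The plan is to reduce the geometric statement to a one–dimensional counting argument carried along the radial order around $b$. First I would fix the radial ordering $q_1,\dots,q_m$ of $R\cup B-\{b\}$ around $b$ (here $m=|R|+|B|-1$), exactly as defined in the text. Since $b$ is a vertex of \CH{R\cup B}, every other point lies in an angular wedge of opening less than $\pi$, so each line $\ell(b,q_i)$ splits the remaining points cleanly into those of smaller and those of larger angle. A short check with the paper's clockwise-angle convention shows that the points lying on or to the left of $\ell(b,q_i)$ are exactly the prefix $\{q_1,\dots,q_i\}$, while those on or to the right are exactly the suffix $\{q_i,\dots,q_m\}$. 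Thus $R_1,B_1$ are read off a prefix ending at $r_1$ and $R_2,B_2$ off a suffix starting at $r_\alpha$, and the whole lemma becomes a statement about where a particular all-red block of the radial order occurs.

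Next I would introduce the discrepancy function $f(i)=\rho(i)-k\,\beta(i)$, where $\rho(i)$ and $\beta(i)$ count the red and blue points among $q_1,\dots,q_i$, and set $f(0)=0$. Each step changes $f$ by $+1$ when $q_i$ is red and by $-k$ when $q_i$ is blue, and $f(m)=|R|-k(|B|-1)=\alpha$, which the hypothesis $k(|B|-1)<|R|\le k|B|$ forces into the range $1\le\alpha\le k$. A direct computation then translates the two target equalities into statements about $f$: writing $q_{i_1}=r_1$ and $q_{i_\alpha}=r_\alpha$, the identity $|R_1|=k|B_1|+1$ is equivalent to $f(i_1)=1$, and $|R_2|=k|B_2|+1$ is equivalent to $f(i_\alpha-1)=\alpha-1$.

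The heart of the argument, and the step I expect to be the main obstacle, is to locate a block of $\alpha$ consecutive red points whose left endpoint carries $f$-value exactly $1$. I would let $j^\ast$ be the \emph{last} index with $f(j^\ast)=0$; it exists because $f(0)=0$, and $j^\ast<m$ because $f(m)=\alpha\ge1$. The key claim is that $f(j)\ge1$ for every $j>j^\ast$. This is an intermediate-value observation: since every red-step raises $f$ by exactly $1$, $f$ cannot move from a negative to a positive value without passing through $0$, so any index $j>j^\ast$ with $f(j)<0$ would force a later index with $f=0$, contradicting the maximality of $j^\ast$; as $f$ is never $0$ beyond $j^\ast$, it is therefore at least $1$ there. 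Granting the claim, starting from $f(j^\ast)=0$ the point $q_{j^\ast+s}$ cannot be blue as long as $f$ has not yet risen above $k$, because a blue step subtracts $k$ and would push $f$ to $0$ or below; hence $q_{j^\ast+1},\dots,q_{j^\ast+\alpha}$ are all red with $f(j^\ast+s)=s$ for $s=1,\dots,\alpha$, using $\alpha\le k$. (If no blue point follows $j^\ast$ at all, the remaining points are all red and number exactly $f(m)=\alpha$, which again suffices.)

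Finally I would set $r_1=q_{j^\ast+1},\dots,r_\alpha=q_{j^\ast+\alpha}$, so $i_1=j^\ast+1$ and $i_\alpha=j^\ast+\alpha$; these are $\alpha$ consecutive red points of the radial order. By construction $f(i_1)=1$, giving $|R_1|=k|B_1|+1$, and $f(i_\alpha-1)=\alpha-1$, giving $|R_2|=k|B_2|+1$. It remains only to dispatch the degenerate endpoints—$j^\ast=0$ (so $R_1=\{r_1\}$, $B_1=\varnothing$) and $i_\alpha=m$ (so $R_2=\{r_\alpha\}$, $B_2=\varnothing$)—each of which satisfies the identity $1=k\cdot0+1$ directly. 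The role of the three consecutive hull vertices $b_1,b,b_2$ in the statement is mainly to pin down $b$ as a blue vertex of \CH{R\cup B} (so that the radial order is a genuine linear order) and to orient that order; the counting argument itself uses only that $b$ is a hull vertex together with the range $1\le\alpha\le k$.
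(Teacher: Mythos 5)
Your proposal is correct and follows essentially the same route as the paper's proof: you introduce the same prefix-counting (discrepancy) function along the radial order (up to a sign flip), locate its \emph{last} zero, show that the points immediately following that zero must all be red (since a blue step would force another zero, contradicting maximality), and then read off both equalities $|R_1|=k|B_1|+1$ and $|R_2|=k|B_2|+1$ from the values of the function at the ends of that red block. The only differences are presentational (index-based bookkeeping, explicit prefix/suffix translation, and a slightly different argument that the block of $\alpha$ red points exists), not a different method.
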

\begin{proof}
By a suitable rotation of the plane, we may assume that $b$ is the lowest point of $\CH{R\cup B}$, and $b_1$ (resp. $b_2$) is to the left (resp. right) of the vertical line passing through $b$. Note that $b_1$ is the first point and $b_2$ is the last point in the clockwise radial ordering of $R\cup B-\{b\}$ around $b$. See Figure~\ref{convex-3blue-fig}. 
We define the function $f$ as follows: For every point $x$ in this radial ordering, 
\begin{align*}
f(x)= & k\cdot (\text{the number of points of $B-\{b\}$ lying on or to the left of $\ell(b,x)$})\\
& -(\text{the number of points of $R$ lying on or to the left of $\ell(b,x)$}).
\end{align*}
Based on this definition, we have $f(b_1)= k\geqslant 2$ and $f(b_2)=k(|B|-1)-|R|=-\alpha \leqslant -1$. Along this radial ordering, the value of $f$ changes by $+k$ at every blue point and by $-1$ at every red point. Since $f(b_1)>0>f(b_2)$, there exists a point in the radial ordering for which $f$ equals 0. Let $v$ be the last point in the radial ordering where $f(v)=0$. Since $b_2$ increases $f$ by $+k$ and $f(b_2)\leqslant -1$, there are at least $k+2$ points strictly after $v$ in the radial ordering. Let $S=(r_1, \dots, r_{k})$ be the sequence of $k$ points strictly after $v$ in the radial ordering. 

{\em Claim: The points of $S$ are red.}
Assume that $r_i$ is blue for some $1\leqslant i\leqslant k$. Then, $r_i$ changes $f$ by $+k$. Since $f$ decreases only at red points, the value of $f(r_i)$ is minimum when all points $r_1,\dots, r_{i-1}$ are red. Thus,
\begin{align*}
f(r_i)&\geqslant f(v) - (i-1) + k\\
&= k-i+1\\
&>0.
\end{align*}
Since $f(r_i)>0>f(b_2)$, there exists a point between $r_i$ and $b_2$ in the radial ordering for which $f$ equals 0. This contradicts the fact that $v$ is the last point in the radial ordering with $f(v)=0$. This proves the claim.

Thus, each $r_i\in S$ is red. Moreover, $f(r_i)=-i$. We show that the subsequence $S'=(r_1,\dots, r_\alpha)$ of $S$ satisfies the statement of the lemma; note that, by definition, $\alpha\leqslant k$.
Having $r_1$ and $r_\alpha$, we define $R_1$, $B_1$, $R_2$ and $B_2$ as in the statement of the lemma. See~Figure~\ref{convex-3blue-fig}. By definition of $f$, we have $f(r_1)=k|B_1|-|R_1|=-1$, and hence $|R_1|=k|B_1|+1$. Moreover, 
\begin{align*}
|R_2| &= |R|-|R_1|-|S'| +2\\
&= |R|-(k|B_1|+1)-(|R|-k(|B|-1)) +2\\
&= k(|B|-|B_1|-1)+1\\
 &= k|B_2|+1,
\end{align*}
where $|S'|$ is the number of elements in the sequence $S'$. Note that $R_2=(R-(R_1\cup S'))\cup\{r_\alpha\}$. Since $r_1$ belongs to both $R_1$ and $S'$, and $r_\alpha$ belongs to $R_2$, the term ``$+2$'' in the first equality is necessary (even for the case when $r_1=r_\alpha$). The last equality is valid because $B_2=B-(B_1\cup \{b\})$. This completes the proof of the lemma. 
\end{proof}

\begin{figure}[htb]
  \centering
\setlength{\tabcolsep}{0in}
  $\begin{tabular}{cc}
 \multicolumn{1}{m{.5\columnwidth}}{\centering\includegraphics[width=.35\columnwidth]{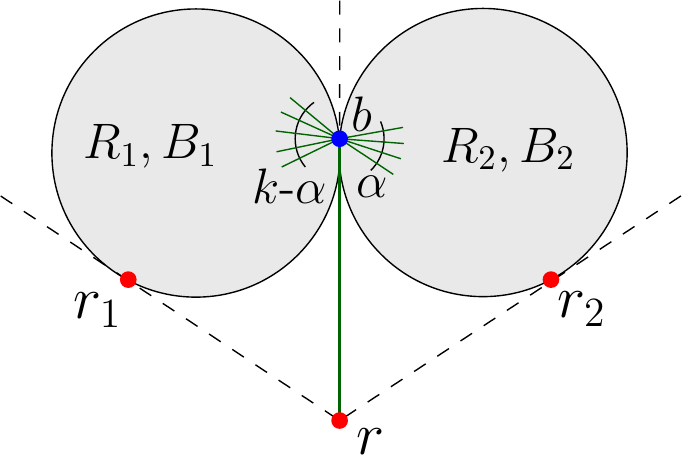}}
&\multicolumn{1}{m{.5\columnwidth}}{\centering\includegraphics[width=.37\columnwidth]{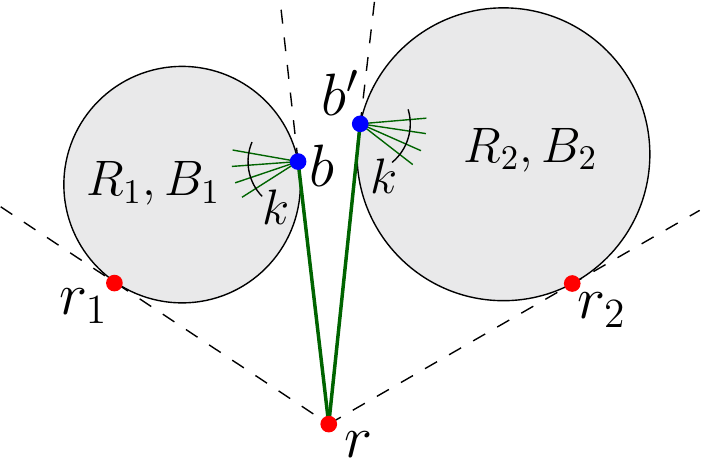}}\\
(a) & (b)
\end{tabular}$
  \caption{(a) $0<f(b)=\alpha<k$, (b) $f(b)=0$, and $b, b'$ are consecutive blue points.}
\label{convex-3red-fig}
\end{figure}

\begin{lemma}
\label{convex-3red}
Let $R$ and $B$ be two sets of red and blue points in the plane, respectively, such that $|B|\geqslant 1$, $|R|=k|B|+1$, with $k\geqslant 2$, and $R\cup B$ is in general position. Let $r_1,r,r_2$ be red points that are counter clockwise consecutive on $\CH{R\cup B}$. Then, one of the following statements holds: 
\begin{enumerate}
\item There exists a blue point $b$ in the radial ordering of $(R\cup B)-\{r\}$ around $r$, such that $|R_1|=k(|B_1|-1)+k-\alpha$ and $|R_2|=k(|B_2|-1)+\alpha$, where $R_1$ $($resp. $B_1$$)$ is the set of red points $($resp. blue points$)$ of $(R\cup B)-\{r\}$ lying on or to the left of $\ell(r,b)$, $R_2$ $($resp. $B_2$$)$ is the set of red points $($resp. blue points$)$ of $(R\cup B)-\{r\}$ lying on or to the right of $\ell(r,b)$, and $0<\alpha<k$.
\item There exist two consecutive blue points $b$ and $b'$ in the radial ordering of $(R\cup B)-\{r\}$ around $r$, such that $|R_1|=k|B_1|$ and $|R_2|=k|B_2|$, where $R_1$ $($resp. $B_1$$)$ is the set of red points $($resp. blue points$)$ of $(R\cup B)-\{r\}$ lying on or to the left of $\ell(r,b)$ and $R_2$ $($resp. $B_2$$)$ is the set of red points $($resp. blue points$)$ of $(R\cup B)-\{r\}$ lying on or to the right of $\ell(r,b')$.
\end{enumerate}
\end{lemma}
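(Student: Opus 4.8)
The plan is to mirror the sweep argument used for Lemma~\ref{convex-3blue}, but now pivoting around the red hull vertex $r$ and tracking a ``surplus of blue'' function. First I would rotate the plane so that $r$ is the lowest vertex of \CH{R\cup B} with $r_1$ to its left and $r_2$ to its right; then $r_1$ is the first and $r_2$ the last point in the clockwise radial ordering of $(R\cup B)-\{r\}$ around $r$. For each point $x$ in this ordering I would define
\[
f(x)=k\cdot(\text{number of points of }B\text{ on or to the left of }\ell(r,x))-(\text{number of points of }R-\{r\}\text{ on or to the left of }\ell(r,x)).
\]
Using $|R|=k|B|+1$, the boundary values come out to $f(r_1)=-1$ and $f(r_2)=0$, and along the ordering $f$ increases by exactly $k$ at each blue point and decreases by exactly $1$ at each red point (general position guarantees that only $r$ and $x$ lie on $\ell(r,x)$). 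The whole lemma then becomes a statement about the values of $f$ at the blue points.

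The core step is a dichotomy driven by how $f$ can cross the interval $(0,k)$. I would first observe that the last blue point in the ordering satisfies $f\geq 1$: after it only red points remain before $r_2$, and each such red raises the value by $1$ above $f(r_2)=0$. Hence some blue point has $f\geq 1$, and I would let $b^*$ be the \emph{first} such blue point. If $b^*$ is the first blue point overall, then its value is at most $k-1$ (since $f\leq -1$ just before it), so $f(b^*)\in\{1,\dots,k-1\}$ and Statement~1 holds with $b=b^*$ and $\alpha=f(b^*)$. Otherwise let $b^-$ be the blue point immediately preceding $b^*$; by minimality $f(b^-)\leq 0$, and if $t\geq 0$ denotes the number of red points strictly between $b^-$ and $b^*$, then $f(b^*)=f(b^-)+k-t\leq k$. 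If $f(b^*)\leq k-1$ we again obtain Statement~1; the only remaining case is $f(b^*)=k$, which forces $f(b^-)=0$ and $t=0$. Thus $b^-$ and $b^*$ are consecutive blue points with no red point between them, $f(b^-)=0$ and $f(b^*)=k$ — precisely the configuration of Statement~2 with $b=b^-$ and $b'=b^*$.

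The last task is to translate these $f$-values into the cardinality equalities. The only delicacy is an off-by-one: a blue pivot lies on its own line $\ell(r,\cdot)$ and is therefore counted on both sides. A short computation using $|R|-1=k|B|$ shows that $f(b)=\alpha$ is equivalent both to $|R_1|=k(|B_1|-1)+k-\alpha$ and to $|R_2|=k(|B_2|-1)+\alpha$, giving Statement~1, whereas $f(b)=0$ yields $|R_1|=k|B_1|$ and $f(b')=k$ yields $|R_2|=k|B_2|$, giving Statement~2. I expect the genuinely fiddly parts to be this bookkeeping and confirming that the dichotomy is exhaustive — in particular that $f$ cannot leap from a nonpositive value past the entire block $\{1,\dots,k-1\}$ except in the single borderline case $f=k$, which Statement~2 absorbs; the geometric setup and the $+k$/$-1$ increment pattern are direct analogues of the proof of Lemma~\ref{convex-3blue}.
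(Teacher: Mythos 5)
Your proposal is correct and takes essentially the same approach as the paper: the same rotation and radial ordering around $r$, the same potential function $f$ with increments $+k$ at blue points and $-1$ at red points, the same boundary values $f(r_1)=-1$, $f(r_2)=0$, and the same translation of $f$-values into the cardinality identities. The only (cosmetic) difference is the selection rule for the pivot: the paper takes the last point where $f$ jumps from negative to nonnegative and splits on $f=0$ versus $0<f<k$, while you take the first blue point with $f\geqslant 1$ and split on $f\leqslant k-1$ versus $f=k$; both yield the same dichotomy between Statements 1 and 2.
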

\begin{proof}
By a suitable rotation of the plane, we may assume that $r$ is the lowest point of $\CH{R\cup B}$, and $r_1$ (resp. $r_2$) is to the left (resp. right) of the vertical line passing through $r$. Note that $r_1$ is the first point and $r_2$ is the last point in the clockwise radial ordering of $(R\cup B)-\{r\}$ around $r$. See Figure~\ref{convex-3red-fig}. 
We define the function $f$ as follows: For every point $x$ in this radial ordering,  
\begin{align*}
f(x)= & k\cdot (\text{the number of points of $B$ lying on or to the left of $\ell(r,x)$})\\
& -(\text{the number of points of $R-\{r\}$ lying on or to the left of $\ell(r,x)$}).
\end{align*}
Based on this definition, we have $f(r_1)= -1$ and $f(r_2)=k|B|-(|R|-1)=0$. Along this radial ordering, the value of $f$ changes by $+k$ at every blue point and by $-1$ at every red point. Let $v$ be the point before $r_2$ in the radial ordering. Since $r_2$ decreases $f$ by $-1$ and $f(r_2)=0$, we have $f(v)=1$. Since $f(r_1)<0< f(v)$, there exists a point $b$ between $r_1$ and $v$ in the radial ordering such that $f(b)\geqslant 0$ and $f$ is negative at $b$'s predecessor. Let $b$ be the last such point between $r_1$ and $v$; it may happen that $b=v$. Observe that $b$ is blue and $f(b)< k$. 
We consider two cases, depending on whether $0<f(b)<k$ or $f(b)=0$.
\begin{itemize}
 \item $0<f(b)<k$. Since $b$ is blue and $r_2$ is red, there is at least one point after $b$ in the radial ordering. Define $R_1$, $B_1$, $R_2$ and $B_2$ as in the first statement of the lemma. See Figure~\ref{convex-3red-fig}(a). Let $\alpha=f(b)$. By definition of $f$, we have $\alpha =f(b)=k|B_1|-|R_1|$, and hence $|R_1|=k|B_1|-\alpha=k(|B_1|-1)+k-\alpha$. Moreover, 
\begin{align*}
|R_2| &= |R|-|R_1|-1\\
 &= (k|B|+1)-(k|B_1|-\alpha) -1\\
&= k(|B|-|B_1|)+\alpha\\
 &= k(|B_2|-1)+\alpha,
\end{align*}
where the last equality is valid because $b$ belongs to both $B_1$ and $B_2$.
\item $f(b)=0$. In the radial ordering there are at least $k+1$ points after $b$ since a red point only decreases the value of $f$ and $f(r_2)=0$. Let $b'$ be the successor of $b$ in the radial ordering. The point $b'$ is blue: If $b'$ is red, we have $f(b')=-1<0<f(v)$ and, thus, there exists a point $b''$ between $b'$ and $v$ such that $f(b'')\geqslant 0$ and $f$ is negative at the predecessor of $b''$, contradicting our choice of $b$. Define $R_1$, $B_1$, $R_2$ and $B_2$ as in the second statement of the lemma.  See Figure~\ref{convex-3red-fig}(b). Since $f(b)=k|B_1|-|R_1|=0$, we have $|R_1|=k|B_1|$. Moreover, 
\begin{align*}
|R_2| &= |R|-|R_1|-1\\
 &= (k|B|+1)-k|B_1|-1\\
&= k(|B|-|B_1|)\\
 &= k|B_2|,
\end{align*}
which completes the proof of the lemma.
\end{itemize}
\end{proof}

\subsection{Proof of Theorem~\ref{thr2}}

We use Lemma~\ref{convex-3blue} and Lemma~\ref{convex-3red} to prove Theorem~\ref{thr2}. Let $R$ and $B$ be two disjoint sets of points in the plane, such that $|R| = k|B|$, with
$k\geqslant 2$, and $R \cup B$ is in general position. We will present an algorithm, $\setup$, that constructs a plane bichromatic tree of maximum degree $k+1$ on $R\cup B$ such that each red vertex has degree at most 3. This algorithm uses two procedures, $\procA$ and $\procB$: 

\begin{table}[H]
\centering
\begin{tabularx}{0.95\textwidth}{|X|l}
\hline
{$\setup(R,B)$}
\\ \hline
\footnotesize{\bf Input:} A set $R$ of red points and a non-empty set $B$ of blue points, where $|R|=k|B|$, with $k\geqslant 2$, and $R\cup B$ is in general position.\\
\footnotesize{\bf Output:} A plane bichromatic $(k+1)$-tree on $R\cup B$ such that each red vertex has degree at most 3.
\\ \hline
\end{tabularx}
\end{table}
\vspace{-10pt}
\begin{table}[H]
\centering
\begin{tabularx}{0.95\textwidth}{|X|l}
\hline
{$\procA(R,B,b)$}
\\ \hline
\footnotesize{\bf Input:} A set $R$ of red points, a non-empty set $B$ of blue points, and a point $b\in B$, where $k(|B|-1)<|R|\leqslant k|B|$, with $k\geqslant 2$, and $b$ is on $\CH{R\cup B}$. \\
\footnotesize{\bf Output:} A plane bichromatic $(k+1)$-tree $T$ on $R\cup B$ where $\dgT{T}{b}=|R|-k(|B|-1)$ and each red vertex has degree at most 3.
\\ \hline
\end{tabularx}
\end{table}
\vspace{-10pt}
\begin{table}[H]
\centering
\begin{tabularx}{0.95\textwidth}{|X|l}
\hline
{$\procB(R,B,r)$}
\\ \hline
\footnotesize{\bf Input:} A set $R$ of red points, a non-empty set $B$ of blue points, and a point $r\in R$, where $|R|= k|B|+1$, with $k\geqslant 2$, and $r$ is on $\CH{R\cup B}$. \\
\footnotesize{\bf Output:} A plane bichromatic $(k+1)$-tree $T$ on $R\cup B$ where $\dgT{T}{r}\in\{1,2\}$ and each other red vertex has degree at most 3.
\\ \hline
\end{tabularx}
\end{table}

First we describe each of the procedures $\procA$ and $\procB$. Then we describe algorithm $\setup$. The procedures $\procA$ and $\procB$ will call each other. As we will see in the description of these procedures, when $\procA$ or $\procB$ is called recursively, the call is always on a smaller point set.
We now describe the base cases for $\procA$ and $\procB$. 

The base case for $\procA$ happens when $|B|=1$, i.e., $B=\{b\}$. In this case, we have $1\leqslant|R|\leqslant k$ and $2\leqslant|R\cup B|\leqslant k+1$. We connect all points of $R$ to $b$, and return the resulting star as a desired tree $T$ where $\dgT{T}{b}=|R|$ and each red vertex has degree 1.

The base case for $\procB$ happens when $|B|=1$; let $b$ be the only point in $B$. In this case, we have $|R|=k+1$ and $|R\cup B|=k+2$. We connect all points of $R$ to $b$, and return the resulting star as a desired tree $T$ where $\dgT{T}{b}=k+1$ and each red vertex has degree 1.

In Section~\ref{procA-section} we describe $\procA(R,B,b)$, whereas $\procB(R,B,r)$ will be described in Section~\ref{procB-section}. In these two sections, we assume that both $\procA$ and $\procB$ are correct for smaller point sets. 
\subsubsection{Procedure \procA}
\label{procA-section}
The procedure $\procA(R,B,b)$ takes as input a set $R$ of red points, a set $B$ of blue points, and a point $b\in B$, where $|B|\geqslant 2$, $k(|B| - 1) < |R| \leqslant k|B|$, with $k\geqslant 2$, and $b$ is on $\CH{R\cup B}$. Let $\alpha=|R|-k(|B|-1)$, and notice that $1\leqslant \alpha\leqslant k$. This procedure computes a plane bichromatic $(k+1)$-tree $T$ on $R\cup B$ where $\dgT{T}{b}=\alpha$ and each red vertex has degree at most 3. 

We consider two cases, depending on whether or not both vertices of $\CH{R\cup B}$ adjacent to $b$ belong to $B$.

\begin{enumerate}[wide, labelindent=0pt,label={\bf Case \arabic*:}]
\item Both vertices of $\CH{R\cup B}$ adjacent to $b$ belong to $B$. We apply Lemma~\ref{convex-3blue} on $R$, $B$, and $b$. Consider the $\alpha$ consecutive red points, $r_1,\dots,r_\alpha$, and the sets $R_1$, $R_2$, $B_1$, and $B_2$ in the statement of Lemma~\ref{convex-3blue}. Note that $r_1$ is a red point on $\CH{R_1\cup B_1}$ and $r_\alpha$ is a red point on $\CH{R_2\cup B_2}$. We distinguish between two cases:
 $1<\alpha\leqslant k$ and $\alpha=1$.
\begin{enumerate}[leftmargin=*,itemindent=30pt, labelindent=0pt,label=Case 1.\arabic*:]
\item $1<\alpha\leqslant k$. In this case $r_1\neq r_\alpha$. Moreover $\CH{R_1\cup B_1}$ and $\CH{R_2 \cup B_2}$ are disjoint. Let $T_1$ be the plane bichromatic $(k+1)$-tree obtained by running $\procB$ on $R_1$, $B_1$, $r_1$; note that $\dgT{T_1}{r_1}\in\{1,2\}$, all other red points in $T_1$ have degree at most 3, and $|R_1\cup B_1|<|R\cup B|$. Similarly, let $T_2$ be the plane bichromatic $(k+1)$-tree obtained by running $\procB$ on $R_2$, $B_2$, $r_\alpha$; note that $\dgT{T_2}{r_\alpha}\in\{1,2\}$, all other red points in $T_2$ have degree at most 3,  and $|R_2\cup B_2|<|R\cup B|$. Let $S$ be the star obtained by connecting the vertices $r_1,\dots, r_\alpha$ to $b$. Then, we obtain a desired tree $T=T_1\cup T_2\cup S$. See Figure~\ref{convex-3blue-fig}. $T$ is a plane bichromatic $(k+1)$-tree on $R\cup B$ with $\dgT{T}{r_1}\in\{2,3\}$, $\dgT{T}{r_\alpha}\in\{2,3\}$, $\dgT{T}{b}=\alpha$, and $\dgT{T}{r_i}=1$ where $1< i<\alpha$. 

\item $\alpha=1$. In this case $r_1=r_\alpha$ and $|R|=k(|B|-1)+1$. Moreover, $r_1\in R_1\cap R_2$. If we handle this case as in the previous case, then it is possible for $r_1$ to be incident on two edges in each of $T_1$ and $T_2$, and incident on one edge in $S$. This makes $\dgT{T}{r_1}=5$. If $k\geqslant 4$, then $T$ is a desired $(k+1)$-tree. But, if $k=2,3$, then $T$ would not be a $(k+1)$-tree. Thus, we handle the case when $\alpha=1$ differently. 

Let $x_1$ and $y_1$ be the two blue neighbors of $b$ on $\CH{R\cup B}$. By a suitable rotation of the plane, we may assume that $b$ is the lowest point of $\CH{R\cup B}$, and $x_1$ (resp. $y_1$) is to the left (resp. right) of the vertical line
passing through $b$. 
Let $C_1=(x_1,\dots, x_j=r_1)$ be the sequence of points on the boundary of $\CH{R_1\cup B_1}$ from $x_1$ to $r_1$ that are visible from $b$. Similarly, define $C_2=(y_1,\dots, r_1)$ on $\CH{R_2\cup B_2}$. See Figure~\ref{convex-3blue-1red-fig}.
Let $x_s$ be the first red point in the sequence $C_1$, and let $y_t$ be the first red point in the sequence $C_2$. Note that $s, t\geqslant 2$. It is possible for $x_s$ or $y_t$ or both to be $r_1$. Consider the subsequences $C'_1=(x_1,\dots,x_s)$ and $C''_1=(x_s,\dots,r_1)$ of $C_1$ as depicted in Figure~\ref{convex-3blue-1red-fig}(a). Similarly, consider the subsequences $C'_2=(y_1,\dots, y_t)$ and $C''_2=(y_t,\dots, r_1)$ of $C_2$. Let $l_1$ and $l_2$ be the lines passing through $(x_{s-1},x_s)$ and $(y_{t-1},y_t)$, respectively. Note that $l_1$ is tangent to $\CH{R_1\cup B_1}$ and $l_2$ is tangent to $\CH{R_2\cup B_2}$. 

\begin{figure}[H]
  \centering
\setlength{\tabcolsep}{0in}
  $\begin{tabular}{cc}
 \multicolumn{1}{m{.5\columnwidth}}{\centering\includegraphics[width=.45\columnwidth]{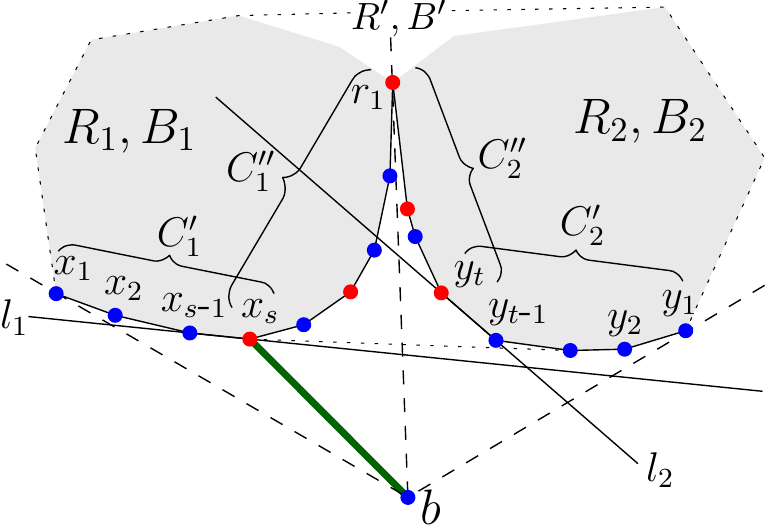}}
&\multicolumn{1}{m{.5\columnwidth}}{\centering\includegraphics[width=.45\columnwidth]{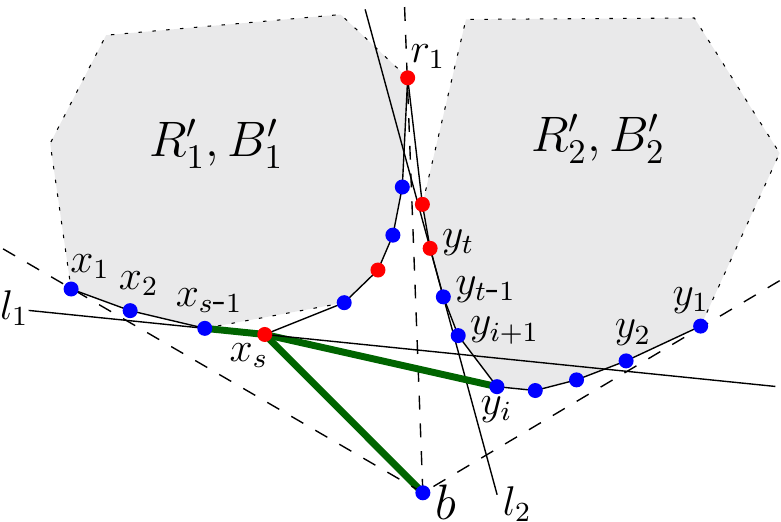}}\\
(a) & (b)
\end{tabular}$
  \caption{(a) $l_1$ does not intersect the interior of $\CH{R\cup B}$, and (b) $l_1$ intersects $C'_2$.}
\label{convex-3blue-1red-fig}
\end{figure}

We consider two cases, depending on whether or not $l_1$ intersects $C_2$ and $l_2$ intersects $C_1$.
\begin{enumerate}[leftmargin=*,itemindent=40pt, labelindent=0pt,label=Case 1.2.\arabic*:]
 \item {\em $l_1$ does not intersect $C_2$, or $l_2$ does not intersect $C_1$.} Because of symmetry, we assume that $l_1$ does not intersect $C_2$. Note that in this case $l_1$ does not intersect the interior of $\CH{R\cup B-\{b\}}$. Let $R'=R$ and $B'=B-\{b\}$; note that $|R'|=|R|=k(|B|-1)+1=k|B'|+1$. In addition, $x_s$ is on $\CH{R'\cup B'}$. See Figure~\ref{convex-3blue-1red-fig}(a). Let $T'$ be the plane bichromatic ($k+1$)-tree obtained by $\procB(R', B', x_s)$. Note that $\dgT{T'}{x_s}\in\{1,2\}$, all other red points in $T'$ have degree at most 3, and $|R'\cup B'|<|R\cup B|$. We obtain a desired tree $T=T'\cup \{(b,x_s)\}$. $T$ is a plane bichromatic$(k+1)$-tree on $R\cup B$ with $\dgT{T}{b}=\alpha=1$ and $\dgT{T}{x_s}\in\{2,3\}$. 

\item {\em $l_1$ intersects $C_2$, and $l_2$ intersects $C_1$.} We distinguish between two cases:

\begin{enumerate}[leftmargin=*,itemindent=50pt, labelindent=0pt,label=Case 1.2.2.\arabic*:]

\item {\em $l_1$ intersects $C'_2$, or $l_2$ intersects $C'_1$.} Because of symmetry, we assume that $l_1$ intersects $C'_2$. Let $(y_i,y_{i+1})$, with $1\leqslant i < t$, be the leftmost edge of $C'_2$ that is intersected by $l_1$ (note that $l_1$ may intersect two edges of $C'_2$). Observe that $y_i$ is a blue point. Let $R'_1=R_1-\{x_s\}$, $B'_1=B_1$, $R'_2=R_2-\{r_1\}$ and $B'_2=B_2$ as shown in Figure~\ref{convex-3blue-1red-fig}(b). Note that $|R'_1|=k|B'_1|$ and $|R'_2|=k|B'_2|$. In addition, $\CH{R'_1\cup B'_1}$ and $\CH{R'_2\cup B'_2}$ are disjoint, $x_{s-1}$ is a blue point on $\CH{R'_1\cup B'_1}$, and $y_i$ is a blue point on $\CH{R'_2\cup B'_2}$. Let $T_1$ be the plane bichromatic $(k+1)$-tree obtained by the recursive call $\procA(R'_1, B'_1, x_{s-1})$, and let $T_2$ be the plane bichromatic $(k+1)$-tree obtained by the recursive call $\procA(R'_2, B'_2, y_i)$. Note that $\dgT{T_1}{x_{s-1}}=k$, $\dgT{T_2}{y_i}=k$, all red points in $T_1$ and $T_2$ have degree at most 3, $|R'_1\cup B'_1|<|R\cup B|$, and $|R'_2\cup B'_2|<|R\cup B|$. We obtain a desired tree $T=T_1\cup T_2\cup \{(b,x_s), (x_{s-1},x_s), (y_i,x_s)\}$; see Figure~\ref{convex-3blue-1red-fig}(b). $T$ is a plane bichromatic $(k+1)$-tree on $R\cup B$ with $\dgT{T}{b}=\alpha=1$, $\dgT{T}{x_s}=3$, $\dgT{T}{x_{s-1}}=k+1$ and $\dgT{T}{y_i}=k+1$. 

\item {\em $l_1$ intersects $C''_2$, and $l_2$ intersects $C''_1$.} In this case $Q=(x_{s-1},x_s,\allowbreak y_t,\allowbreak y_{t-1})$ is a convex quadrilateral because $l_1\cap (y_{t-1},y_t)=\emptyset$ and $l_2\cap (x_{s-1},x_s)=\emptyset$. Moreover, $Q$ does not have any point of $R\cup B$ in its interior and it has no intersection with the interiors of $\CH{R_1\cup B_1}$ and $\CH{R_2\cup B_2}$. We handle this case as in Case 1.2.2.1 with the blue point $y_{t-1}$ playing the role of $y_i$. Observe that this construction gives a valid tree even if $x_s=y_t=r_1$.
\end{enumerate}
\end{enumerate}
\end{enumerate}

\item At least one of the vertices on $\CH{R\cup B}$ adjacent to $b$ does not belong to $B$. Let $x_1$ be such a vertex that belongs to $R$. Initialize $X=\{x_1\}$. If at least one of the vertices of $\CH{(R-X)\cup B}$ adjacent to $b$ does not belong to $b$, let $x_2$ be such a red point. Add $x_2$ to the set $X$. Repeat this process on $\CH{(R-X)\cup B}$ until $|X|=\alpha$ or both neighbors of $b$ on $\CH{(R-X)\cup B}$ are blue points. Let $x_1,\dots,x_\beta$ be the sequence of red points added to $X$ in this process. After this process we have $|X|=\beta$, where $1\leqslant \beta\leqslant \alpha$. Let $S_1$ be the star obtained by connecting all points of $X$ to $b$. See Figure~\ref{red-blue-fig} where $S_1$ is shown with green bold edges. Observe that $\dgT{S_1}{b}=\beta$. We distinguish between two cases: $\beta=\alpha$ and $1\leqslant \beta<\alpha$. 

\begin{figure}[htb]
  \centering
\setlength{\tabcolsep}{0in}
  $\begin{tabular}{cc}
 \multicolumn{1}{m{.5\columnwidth}}{\centering\includegraphics[width=.3\columnwidth]{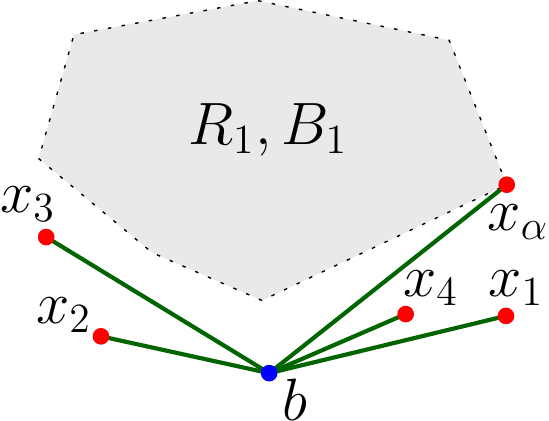}}
&\multicolumn{1}{m{.5\columnwidth}}{\centering\includegraphics[width=.3\columnwidth]{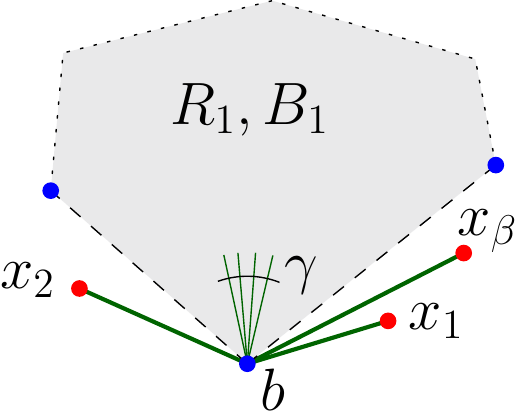}}\\
(a) & (b)
\end{tabular}$
  \caption{The edges in $S_1$ are in bold where (a) $\beta=\alpha$, and (b) $1\leqslant \beta<\alpha$.}
\label{red-blue-fig}
\end{figure}

\begin{enumerate}[leftmargin=*,itemindent=30pt, labelindent=0pt,label=Case 2.\arabic*:]
 \item $\beta=\alpha$. Let $R_1=(R-X)\cup\{x_\alpha\}$ and $B_1=B-\{b\}$. See Figure~\ref{red-blue-fig}(a). Note that $x_\alpha$ is a red point on $\CH{R_1\cup B_1}$ and  
\begin{align*}
|R_1| &= |R|-\alpha +1\\
 &= |R|-(|R|-k(|B|-1))+1\\
&= k|B_1|+1.
\end{align*}
Let $T_1$ be the plane bichromatic $(k+1)$-tree obtained by $\procB(R_1, B_1, x_\alpha)$ with $\dgT{T_1}{x_\alpha}\in\{1,2\}$ and all other red points in $T_1$ are of degree at most 3. Note that $|R_1\cup B_1|<|R\cup B|$. We obtain a desired tree $T=T_1\cup S_1$. $T$ is a plane bichromatic $(k+1)$-tree on $R\cup B$ with $\dgT{T}{x_\alpha}\in\{2,3\}$ and $\dgT{T}{b}=\alpha=|R|-k(|B|-1)$ as required.

\item $1\leqslant\beta<\alpha$. In this case both vertices of $\CH{(R-X) \cup B}$ adjacent to $b$ are blue points. Let $R_1=R-X$ and $B_1=B$. See Figure~\ref{red-blue-fig}(b). Let $\gamma=\alpha - \beta$ and note that $1\leqslant \gamma< \alpha\leqslant k$. Then,  
\begin{align*}
|R_1| &= |R|-\beta\\
   &= (k(|B|-1)+\alpha)-\beta\\
   &= k(|B_1|-1)+\gamma.
\end{align*}
Thus, $k(|B_1|-1)<|R_1|\leqslant k|B_1|$. Let $T_1$ be the plane bichromatic $(k+1)$-tree obtained by the recursive call $\procA(R_1, B_1, b)$ with $\dgT{T_1}{b}=\gamma$ and all red points of $T_1$ are of degree at most 3. Note that $|R_1\cup B_1|<|R\cup B|$. We obtain a desired tree $T=T_1\cup S_1$. $T$ is a plane bichromatic $(k+1)$-tree on $R\cup B$ with $\dgT{T}{b}=\beta+\gamma=\alpha$.
\end{enumerate}
\end{enumerate}
\subsubsection{Procedure \procB}
\label{procB-section}
The procedure $\procB(R,B,r)$ takes as input a set $R$ of red points, a set $B$ of blue points, and a point $r\in B$, where $|B|\geqslant 2$, $|R| = k|B| + 1$, with $k\geqslant 2$, and $r$ is on $\CH{R\cup B}$. This procedure computes a plane bichromatic $(k+1)$-tree $T$ on $R\cup B$ where $\dgT{T}{r}\in\{1,2\}$ and each other red vertex has degree at most 3.
 
We consider two cases, depending on whether or not both vertices of $\CH{R\cup B}$ adjacent to $r$ belong to $R$.
\begin{enumerate}[wide, labelindent=0pt,label={\bf Case \arabic*:}]
\item At least one of the vertices on $\CH{R\cup B}$ adjacent to $r$ does not belong to $R$. Let $b$ be such a point belonging to $B$.
Let $R_1=R-\{r\}$, $B_1=B$. Note that $|R_1|=k|B_1|$, and $b$ is on $\CH{R_1\cup B_1}$. Let $T_1$ be the plane bichromatic $(k+1)$-tree obtained by $\procA(R_1, B_1, b)$ with $\dgT{T_1}{b}=k$ and all red points of $T_1$ are of degree at most 3. Note that $|R_1\cup B_1|<|R\cup B|$. Then, we obtain a desired tree $T =T_1\cup\{(r,b)\}$. $T$ is a plane bichromatic $(k + 1)$-tree on $R\cup B$ with $\dgT{T}{b}=k+1$ and $\dgT{T}{r}=1$.

\item Both vertices of $\CH{R\cup B}$ adjacent to $r$ belong to $R$. In this case, by Lemma~\ref{convex-3red} there are two possibilities:

\begin{enumerate}[leftmargin=*,itemindent=30pt, labelindent=0pt,label=Case 2.\arabic*:]
 \item The first statement in Lemma~\ref{convex-3red} holds. Consider the blue point $b$ and the sets $R_1$, $R_2$, $B_1$, and $B_2$ in this statement. Note that $b$ is a blue point on $\CH{R_1\cup B_1}$ and on $\CH{R_2\cup B_2}$. Let $T_1$ and $T_2$ be the plane bichromatic $(k+1)$-trees obtained by running $\procA(R_1, B_1, b)$ and $\procA(R_2, B_2, b)$, respectively. Note that $\dgT{T_1}{b}= k-\alpha$, $\dgT{T_2}{b}= \alpha$, all red points of $T_1$ and $T_2$ have degree at most 3, $|R_1\cup B_1|<|R\cup B|$, and $|R_2\cup B_2|<|R\cup B|$. We obtain a desired tree $T=T_1\cup T_2\cup \{(r,b)\}$. See Figure~\ref{convex-3red-fig}(a). $T$ is a plane bichromatic $(k+1)$-tree on $R\cup B$ with $\dgT{T}{r}=1$ and $\dgT{T}{b}=(k-\alpha)+\alpha+1=k+1$.
\item The second statement in Lemma~\ref{convex-3red} holds. Consider the blue points $b, b'$ and the sets $R_1$, $R_2$, $B_1$, and $B_2$ in this statement. Note that $b$ is a blue point on $\CH{R_1\cup B_1}$ and $b'$ is a blue point on $\CH{R_2\cup B_2}$. Let $T_1$ and $T_2$ be the plane bichromatic $(k+1)$-trees obtained by $\procA(R_1, B_1, b)$ and $\procA(R_2, B_2, b')$. Note that $\dgT{T_1}{b}= k$, $\dgT{T_2}{b'}= k$, all red points of $T_1$ and $T_2$ have degree at most 3, $|R_1\cup B_1|<|R\cup B|$, and $|R_2\cup B_2|<|R\cup B|$. We obtain a desired tree $T=T_1\cup T_2\cup \{(r,b),(r,b')\}$. See Figure~\ref{convex-3red-fig}(b). $T$ is a plane bichromatic $(k+1)$-tree on $R\cup B$ with $\dgT{T}{b}=k+1$, $\dgT{T}{b'}=k+1$, and $\dgT{T}{r}=2$.
\end{enumerate}
\end{enumerate}
\subsubsection{Algorithm \setup}
Algorithm $\setup(R,B)$ takes as input a set $R$ of red points and a non-empty set $B$ of blue points, where $|R|=k|B|$, with $k\geqslant 2$, and $R\cup B$ is in general position. This algorithm constructs a plane bichromatic $(k+1)$-tree $T$ on $R\cup B$ such that each red vertex has degree at most 3. By Observation~\ref{obs1}, $T$ has only one blue vertex of degree $k$ and the other blue vertices are of degree $k+1$.
We consider two cases, depending on whether or not all vertices of $\CH{R\cup B}$ belong to $R$.
\vspace{5pt}

\noindent{\bf Case 1:} At least one of the vertices of $\CH{R\cup B}$ belongs to $B$. Let $b$ be such a vertex. Let $T$ be the tree obtained by running $\procA(R, B,b)$. $T$ is a plane bichromatic $(k+1)$-tree on $R\cup B$ with $\dgT{T}{b}=k$ and all red vertices of $T$ are of degree at most 3. Notice that $b$ is the only blue vertex of degree $k$ in $T$.

\vspace{5pt}

\begin{wrapfigure}{r}{0.35\textwidth}
  \begin{center}
\vspace{-30pt}
\includegraphics[width=.33\textwidth]{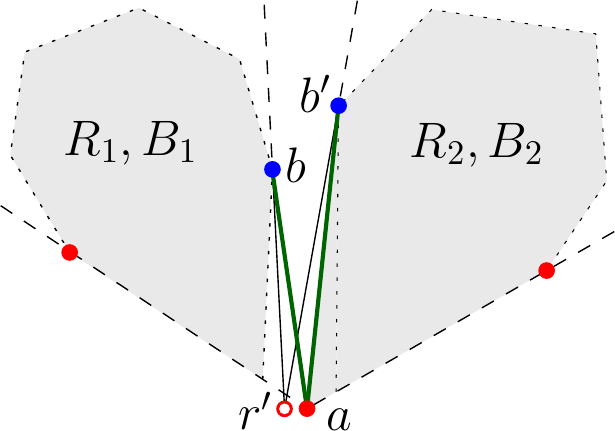}
\vspace{-10pt}
\caption{A dummy red point $r'$ is placed very close to $a$.}
  \label{dummy-red-fig}
  \end{center}
\vspace{-20pt}
\end{wrapfigure}
\noindent{\bf Case 2:} All vertices of $\CH{R\cup B}$ belong to $R$. Let $a$ be an arbitrary red point on $\CH{R\cup B}$. By a suitable rotation of the plane, we may assume that $a$ is the lowest point of $\CH{R\cup B}$. We add a dummy red point $r'$ at a sufficiently small distance $\epsilon$ to the left of $a$ such that the radial ordering of the points in $(R\cup B) -\{a\}$ around $r'$ is the same as their radial ordering around $a$. See Figure~\ref{dummy-red-fig}. Now we consider the radial ordering of the points in $R\cup B$ (including $a$) around $r'$. 
We apply Lemma~\ref{convex-3red} with $r'$ playing the role of $r$. There are two possibilities:

\begin{enumerate}[leftmargin=*,itemindent=25pt, labelindent=0pt,label=Case 2.\arabic*:] 
\item The first statement in Lemma~\ref{convex-3red} holds. Consider the blue point $b$ and the sets $R_1$, $R_2$, $B_1$, and $B_2$ as in the first statement of Lemma~\ref{convex-3red}. Note that $a\in R_2$, $r'\notin R_1\cup R_2$, and $b$ is a blue point on $\CH{R_1\cup B_1}$ and on $\CH{R_2\cup B_2}$. Let $T_1$ and $T_2$ be the plane bichromatic $(k+1)$-trees obtained by $\procA(R_1, B_1, b)$ and $\procA(R_2, B_2, b)$, respectively. Note that $\dgT{T_1}{b}= k-\alpha$, $\dgT{T_2}{b}= \alpha$, and all red vertices of $T_1$ and $T_2$ have degree at most 3. We obtain a desired tree $T=T_1\cup T_2$ with $\dgT{T}{b}=k-\alpha+\alpha=k$; $b$ is the only blue vertex of degree $k$ in $T$.
\item The second statement in Lemma~\ref{convex-3red} holds. Consider the blue points $b, b'$ and the sets $R_1$, $R_2$, $B_1$, and $B_2$ as in the second statement of Lemma~\ref{convex-3red}. Note that $a\in R_2$, $r'\notin R_1\cup R_2$, $b$ is a blue point on $\CH{R_1\cup B_1}$, and $b'$ is a blue point on $\CH{R_2\cup B_2}$. 
If we compute trees on $R_1\cup B_1$ and $R_2\cup B_2$ and discard $r'$, as we did in the previous case, then the resulting graph is not connected and hence it is not a tree. Thus, we handle this case in a different way. First we remove $a$ from $R_2$ as shown in Figure~\ref{dummy-red-fig}; this makes $|R_2|=k|B_2|-1=k(|B_2|-1)+(k-1)$. Note that $\CH{R_1\cup B_1}$ and $\CH{R_2\cup B_2}$ are disjoint. Let $T_1$ and $T_2$ be the plane bichromatic $(k+1)$-trees obtained by $\procA(R_1, B_1, b)$ and $\procA(R_2, B_2, b')$, respectively. Note that $\dgT{T_1}{b}=k$,  $\dgT{T_2}{b'}=k-1$, and all red vertices of $T_1$ and $T_2$ have degree at most 3. We obtain a desired tree $T=T_1\cup T_2\cup \{(a,b),(a,b')\}$ with $\dgT{T}{a}=2$, $\dgT{T}{b}=k+1$ and $\dgT{T}{b'}=k$; $b'$ is the only vertex of degree $k$ in $T$.
\end{enumerate}
This concludes the description of algorithm $\setup$. The pseudo code for $\procA$, $\procB$, and $\setup$ are given in Algorithms~\ref{procA},~\ref{procB}, and~\ref{setup}, respectively. 
\begin{algorithm}[H]                 
\caption{\procA$(R, B, b)$}          
\label{procA} 
\require{$R$, $B$, and $b\in B$ such that $k(|B|-1)<|R|\leqslant k|B|$, with $k\geqslant 2$, and $b$ on $\CH{R\cup B}$.}\\
\ensure{a plane bichromatic $(k+1)$-tree $T$ on $R\cup B$ with $\dgT{T}{b}=|R|-k(|B|-1)$ and each red vertex has degree at most 3.}
\begin{algorithmic}[1]
    \State $S_1, X\gets\emptyset;~\alpha\gets|R|-k(|B|-1)$
    \If {both neighbors of $b$ on $\CH{R\cup B}$ are blue}
	  \State $r_1,\dots, r_\alpha \gets$ radially consecutive red points obtained in Lemma~\ref{convex-3blue}	  \State $R_1,B_1,R_2,B_2 \gets$ point sets obtained in Lemma~\ref{convex-3blue}

	  \If {$\alpha=1$}
	      $T\gets$ the tree obtained in Case 1.2
	  \Else
	      ~$T\gets\{(b,r_1),\dots,(b,r_\alpha)\}\cup\procB(R_1,B_1,r_1)\cup \procB(R_2,B_2,r_\alpha)$
	  \EndIf
    \Else
    \While {$b$ has a red neighbor on $\CH{(R- X)\cup B}$ and $\dgT{S_1}{b}< \alpha$}
	  \State $x\gets$ a red neighbor of $b$ on $\CH{(R-X)\cup B}$
	  \State $X\gets X\cup \{x\},~S_1\gets S_1\cup \{(b,x)\}$
     \EndWhile
     \If {$\dgT{S_1}{b}=\alpha$}
	  $T\gets S_1\cup \procB(R\cup\{x\},B-\{b\},x)$
      \Else
	  ~$T\gets S_1\cup \procA(R-X,B,b)$
    \EndIf
\EndIf
\State \Return $T$
\end{algorithmic}
\end{algorithm}

\begin{algorithm}[H]                 
\caption{\procB$(R, B, r)$}          
\label{procB} 
\require{$R$, $B$, and $r\in R$ such that $|R|=k|B|+1$, with $k\geqslant 2$, and $r$ on $\CH{R\cup B}$.}\\
\ensure{a plane bichromatic $(k+1)$-tree $T$ on $R\cup B$ with $\dgT{T}{r}\in\{1,2\}$ and each other red vertex has degree at most 3.}
\begin{algorithmic}[1]
    \If {$B=\emptyset$}
	\Return $\emptyset$
    \EndIf
    \If {$r$ has any blue neighbor on $\CH{R\cup B}$}
	  \State {$b\gets$ a blue neighbor of $r$ on $\CH{R\cup B}$}
	  \State $T\gets (r,b)\cup\procA(R-\{r\},B, b)$
      \Else
     \If {case 1 of Lemma~\ref{convex-3red} with respect to $r$}
	  \State $b\gets$ the point obtained in case 1 of Lemma~\ref{convex-3red}
	  \State $R_1, B_1, R_2, B_2\gets$ point sets obtained in case 1 of Lemma~\ref{convex-3red}
	  \State $T\gets \{(r,b)\}\cup\procA(R_1,B_1,b)\cup \procA(R_2,B_2,b)$
     \Else
	  \State $b, b'\gets$ the points obtained in case 2 of Lemma~\ref{convex-3red}
	  \State $R_1, B_1, R_2, B_2\gets$ point sets obtained in case 2 of Lemma~\ref{convex-3red}
	  \State $T\gets \{(r,b),(r,b')\}\cup\procA(R_1,B_1,b)\cup \procA(R_2,B_2,b')$
     \EndIf
\EndIf
\State \Return $T$
\end{algorithmic}
\end{algorithm}

\begin{algorithm}[H]                 
\caption{\setup$(R, B)$}          
\label{setup} 
\require{a set $R$ of red points, a set $B$ of blue points such that $|R|=k|B|$, with $k\geqslant 2$.}\\
\ensure{a plane bichromatic $(k+1)$-tree $T$ on $R\cup B$ such that each red vertex has degree at most 3.}
\begin{algorithmic}[1]
    \If {there is a blue point $b$ on $\CH{R\cup B}$}
	  \State $T\gets\procA(R,B, b)$
      \Else
    \State $a\gets$ a red point on $\CH{R\cup B}$
    \State $r'\gets$ a dummy red point very close and to the left of $a$ on $\CH{R\cup B}$
     \If {case 1 of Lemma~\ref{convex-3red} with respect to $r'$}
	  \State $b\gets$ the point obtained in case 1 of Lemma~\ref{convex-3red}
	  \State $R_1, B_1, R_2, B_2\gets$ point sets obtained in case 1 of Lemma~\ref{convex-3red}
	  \State $T\gets \procA(R_1,B_1,b)\cup \procA(R_2,B_2,b)$
     \Else
	  \State $b, b'\gets$ the points obtained in case 2 of Lemma~\ref{convex-3red}
	  \State $R_1, B_1, R_2, B_2\gets$ point sets obtained in case 2 of Lemma~\ref{convex-3red}
	  \State $T\gets \{(a,b),(a,b')\}\cup\procA(R_1,B_1,b)\cup \procA(R_2-\{a\},B_2,b')$
     \EndIf
\EndIf
\State \Return $T$
\end{algorithmic}
\end{algorithm}

\section{Proof of Conjecture~\ref{conj2}}
\label{general-section}
In this section we prove Conjecture~\ref{conj2}: Given two disjoint sets, $R$ and $B$, of points in the plane such that $|B|\leqslant|R|$, $\left\lceil\frac{|R|}{|B|}\right\rceil=k$, with $k\geqslant 2$, and $R\cup B$ is in general position, we show how to compute a plane bichromatic tree on $R\cup B$ whose maximum vertex degree is at most $k+1$.  

We prove Conjecture~\ref{conj2} by modifying the algorithm presented in Section~\ref{k+1-section}.
Since $|B|\leqslant|R|$ and $\left\lceil\frac{|R|}{|B|}\right\rceil=k$, we have $(k-1)|B|<|R|\leqslant k|B|$, with $k\geqslant 2$. If $|R|=k|B|$, then by Theorem~\ref{thr2} there exists a plane bichromatic $(k+1)$-tree on $R\cup B$. Therefore, assume that $(k-1)|B|<|R|< k|B|$. Let $\omega = k|B|-|R|$. Observe that $1\leqslant \omega< |B|<|R|$. Our main idea for proving Conjecture~\ref{conj2} is to add $\omega$ dummy red points and then use (a modified version of) the algorithm presented in Section~\ref{k+1-section} to obtain a $(k+1)$-tree on $R\cup B$. 

We pick an arbitrary subset $S$ of $R$ of size $\omega$. For each point $r\in S$ we add a new red point $r'$ at a sufficiently small distance $\epsilon$ to $r$ such that the cyclic orders of $(R\cup B)-\{r\}$ around $r$ and around $r'$ are equal. Let $R'$ be the set of these new points. 

We call the points in $R'$ as {\em dummy} red points, the points in $S$ as {\em saturated} red points, and the points in $R-S$ as {\em unsaturated} red points. Each dummy red point corresponds to a saturated red point, and vice versa. 

In Section~\ref{unify-section} we present a method that computes a plane bichromatic $(k+1)$-tree on $R\cup B$ for the case when $k\geqslant 4$. In Section~\ref{pass-refine-section} we present another method that computes a plane bichromatic $(k+1)$-tree on $R\cup B$ for the case when $k\geqslant 2$, which proves Conjecture~\ref{conj2}.

\begin{figure}[htb]
  \centering
  \includegraphics[width=.55\columnwidth]{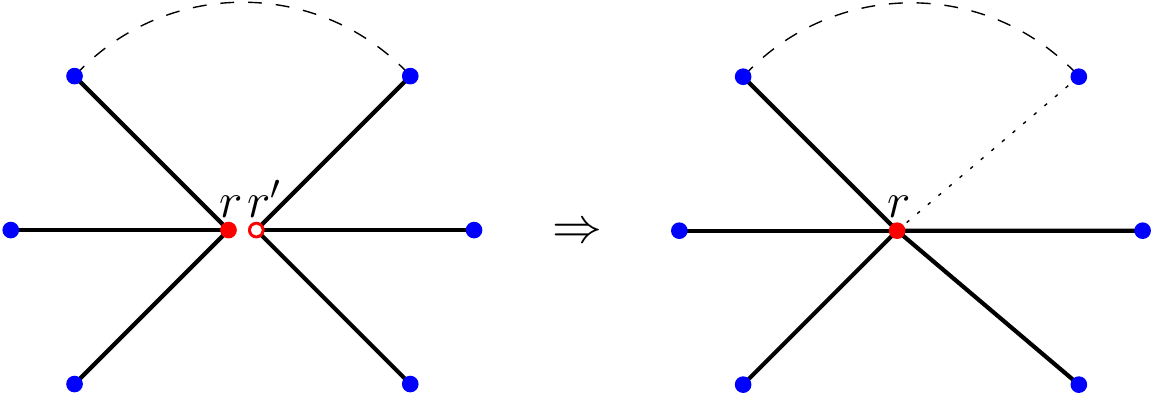}
 \caption{Unifying the dummy red point $r'$ with its corresponding saturated red point $r$.}
  \label{identify-fig}
\end{figure}

\subsection{Unify Method}
\label{unify-section}
In this section we present a method that computes a plane bichromatic tree on $R\cup B$ whose maximum vertex degree is at most $\max\{5, k+1\}$.

Note that $|R\cup R'|$ is equal to $k|B|$. 
Let $T'$ be the plane bichromatic $(k+1)$-tree obtained by running algorithm $\setup$ on $R\cup R'$, $B$. Consider any dummy red vertex $r'$ in $T'$. Let $r$ be the saturated red vertex in $T'$ that corresponds to $r'$. Recall that $\dgT{T'}{r}\leqslant 3$ and $\dgT{T'}{r'}\leqslant 3$. We {\em unify} $r'$ with $r$ as follows. Connect all points adjacent to $r'$ to $r$, and then remove $r'$ from $R$. This creates a cycle in the resulting graph where $r$ is on that cycle. Then, remove one of the edges of the cycle incident on $r$; see Figure~\ref{identify-fig}. As a result, we obtain a tree $T''$ with $\dgT{T''}{r}\leqslant 5$. Let $T$ be the tree obtained after unifying all the dummy red points with their corresponding saturated red points. Then, $T$ is a plane bichromatic tree on $R\cup B$ whose maximum vertex degree is at most $\max\{5, k+1\}$. If $k\geqslant4$, then this construction gives a plane bichromatic $(k+1)$-tree. However, if $k\in\{2,3\}$ then this construction may give a tree of degree 5. 

\subsection{Pair-Refine Method}
\label{pass-refine-section}
In this section we present a method that computes a plane bichromatic tree on $R\cup B$ whose maximum vertex degree is at most $k+1$ when $k\in\{2,3\}$. However, this method works for all $k\geqslant 2$.

Recall that $|R\cup R'|$ is equal to $k|B|$. Consider a plane bichromatic $(k+1)$-tree $T'$ obtained by $\setup(R\cup R', B)$. If in $T'$ all the dummy points are leaves, i.e., $\dgT{T'}{r'}=1$ for all $r'\in R'$, then a desired tree can easily be obtained by removing the dummy vertices from $T'$. In this section we show how to extend/modify the algorithm of Section~\ref{k+1-section} to obtain a plane bichromatic $(k+1)$-tree $T'$ where all the dummy points are leaves, i.e., no dummy point appears as an internal node in $T'$. For simplicity, we write $R$ for $R\cup R'$. We denote the set of dummy points in a set $R$ by $\mathcal{D}_R$. A dummy red point $r'$ can appear as an internal node only in one of the following cases:

\begin{enumerate}[wide, labelindent=0pt,label={Case \arabic*:}]
\item $\procA$ calls $\procB$ on $r'$ when $\alpha\geqslant 2$
\item $r'=x_s$ in $\procA$ when $\alpha=1$
\item $\setup$ calls $\procB$ on $r'$.
\end{enumerate}

We show how to handle each of these cases.
Consider the moment we want to call $\procB$ on $r'$ either in $\procA$ or in $\setup$. Let $r$ be the saturated red point corresponding to $r'$. In order to prevent $r'$ from becoming an internal node, we either delete $r'$ or pair $r'$ to an unsaturated red point different from $r$; this makes $r$ unsaturated. Then, we call $\procB$ on $r$ (instead of on $r'$). To see why it is necessary to delete or pair $r'$, assume we called $\procB$ on $r$ while keeping $r'$. Later on, during the algorithm we may call $\procB$ on $r'$ for the second time. Now if we run $\procB$ on $r$ for the second time, it may lead to creating a cycle in $T'$ or increasing the degree of $r$. 

\begin{figure}[htb]
  \centering
\setlength{\tabcolsep}{0in}
  $\begin{tabular}{cc}
 \multicolumn{1}{m{.35\columnwidth}}{\centering\includegraphics[width=.2\columnwidth]{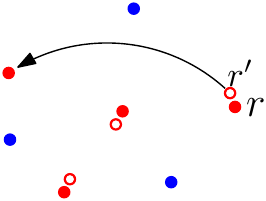}}
&\multicolumn{1}{m{.65\columnwidth}}{\centering\includegraphics[width=.5\columnwidth]{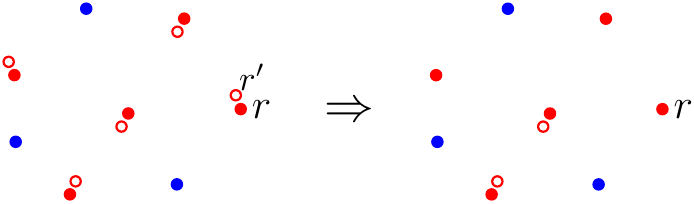}}\\
(a) & (b)
\end{tabular}$
  \caption{$\passrefine$ method: (a) moving the dummy red point $r'$ to an unsaturated red point where $|R|=2|B|+1$, (b) refining the set of dummy red points by removing $|B|$-many of them (including $r'$) where $|R|=3|B|+1$; after refinement we have $|R|=2|B|+1$. Disks represent blue points, big circles represent non-dummy red points, and small circles represent dummy red points}
\label{pass-refine-fig}
\end{figure}

Since we want to run $\procB$ on $r$ right after deleting or pairing $r'$, we have to make sure that the conditions of $\procB$ are satisfied after deleting or pairing $r'$. Thus, we define two new operations: {\em pair} and {\em refine}. 
The pair operation moves a dummy red point from its corresponding saturated red point to an unsaturated red point; see Figure~\ref{pass-refine-fig}(a). This makes the saturated red point to become unsaturated, and vice versa. The refine operation deletes some dummy red points from $R$; this makes their corresponding saturated points to become unsaturated; see Figure~\ref{pass-refine-fig}(b). 
We define $\passrefine$ method is depicted in Algorithm~\ref{pass-refine}. This method gets a set $R$ of red points (some of them are dummy), a non-empty set $B$ of blue points, and a saturated point $r\in R$ that is on $\CH{(R-\mathcal{D}_R)\cup B}$ as input. Since we call $\passrefine$ method right before calling $\procB$, we are in the case where $|R|=k|B|+1$, with $k\geqslant 2$. Let $r'$ be the dummy point in $R$ that saturated $r$. The $\passrefine$ method either (i) pairs $r'$ to an unsaturated point $r_1 \in R$ (this is done by removing $r'$ from $R$, and adding a new dummy point to $R$ at a very small distance from $r_1$); this makes $r$ unsaturated, or (ii) refines $R$ by removing $|B|$ dummy points (including $r'$) from $R$; again this makes $r$ unsaturated. After this method we have $|B|\neq\emptyset$, $|R|=k'|B|+1$, with $2\leqslant k'\leqslant k$, and $r$ on $\CH{(R-\mathcal{D}_R)\cup B}$; these are necessary conditions for $\procB$.

\begin{algorithm}[H]                 
\caption{\passrefine$(R, B, r)$}          
\label{pass-refine} 
\require{a set $R$ of red points, a non-empty set $B$ of blue points, a saturated point $r\in R$ that is on $\CH{(R-\mathcal{D}_R)\cup B}$, where $|R|=k|B|+1$, with $k\geqslant 2$.}\\
\ensure{$R$, $B$, $r$ where $r$ is unsaturated, and $|R|=k'|B|+1$, with $2\leqslant k'\leqslant k$.}
\begin{algorithmic}[1]
\State $S\gets R-\mathcal{D}_R$ 
\State $r'\gets$ the point of $\mathcal{D}_R$ who saturated $r$
\If {$|S|\neq |\mathcal{D}_R|$} 

\noindent there is an unsaturated red point $r_1$ in $S$. We pair $r'$ with $r_1$; $r$ becomes unsaturated while $r_1$ becomes saturated. Then, we return $R$, $B$, $r$. 
\Else 

\noindent all points of $S$ are saturated. Observe that $|R|=|S|+|\mathcal{D}_R|=2|S|$. Thus, $|R|=2|S|=k|B|+1$, implying that $k$ is an odd number and, thus, $k\geqslant 3$. Then, we remove $|B|$ arbitrary dummy points (including $r'$) from $R$. See Figure~\ref{pass-refine-fig}(b). As a result, we have $|R|=(k-1)|B|+1$ where $k\geqslant 3$; alternatively $|R|=k'|B|+1$ where $2\leqslant k'< k$. Then, we return $R$, $B$, $r$.
\EndIf
\end{algorithmic}
\end{algorithm}

Now we show how to handle each of Cases 1, 2, and 3.

\begin{enumerate}[wide, labelindent=0pt,label={\bf Case \arabic*:}]
\item Assume we called $\procA$ on a blue point $b$, and let $r_1,\dots, r_\alpha$ be the selected consecutive red points where $\alpha\geqslant 2$. We do not care about $r_2,\dots,r_{\alpha-1}$ at this moment as they will be of degree one in the final tree. Before calling $\procB(R_1,B_1, r_1)$ and $\procB(R_2, B_2, r_\alpha)$, we look at the following two cases, depending on whether or not $r_1$ and $r_\alpha$ saturate each other. 
\begin{itemize}
\item {\em $r_1$ is saturated by $r_\alpha$, or $r_\alpha$ is saturated by $r_1$.} In this case $\alpha=2$ and hence $r_\alpha=r_2$. By symmetry, assume $r_1$ is saturated by $r_\alpha$; note that $r_\alpha$ is a dummy point. First we remove the dummy point $r_\alpha$ from $R_2$. Then we add $r_1$ to $R_2$. Note that $|R_2|$ does not change. In addition $r_1$ belongs to both $R_1$ and $R_2$. Then, we proceed as in Case 2 where $\alpha=1$ (the degree of $b$ will be 1 instead of 2).

\item {\em $r_1$ is not saturated by $r_\alpha$, and $r_\alpha$ is not saturated by $r_1$.} We only describe the solution for handling $\procB(R_1, B_1, r_1)$; we handle $\procB(R_2, B_2, r_\alpha)$ in a similar way. We consider three cases, depending on whether $r_1$ is unsaturated, saturated, or dummy. 
\begin{itemize}
\item {\em $r_1$ is unsaturated.} We connect $b$ to $r_1$, and then call $\procB(R_1, B_1, r_1)$.

\item {\em $r_1$ is saturated.} We connect $b$ to $r_1$. Let $r'_1$ be $r_1$'s corresponding dummy point. If $r'_1\notin R_1$, then we call $\procB(R_1, B_1, r_1)$. If $r'_1\in R_1$, then we call $\passrefine(R_1,B_1,r_1)$ first; this makes $r_1$ unsaturated and $|R_1|=k'|B_1|+1$, with $2\leqslant k'\leqslant k$. Then we call $\procB(R_1, B_1, r_1)$.

\item {\em $r_1$ is dummy.} Let $r$ be $r_1$'s corresponding saturated point. We connect $b$ to $r$ (instead of connecting to $r_1$). If $r\notin R_1$ (note that $r=r_2$) then we remove $r_1$ from $R_1$ and add $r$ to $R_1$; this makes $r$ unsaturated in $R_1$ while $|R_1|$ does not change. Then we call $\procB(R_1, B_1, r)$. If $r\in R_1$ then we first call $\passrefine(R_1,B_1,r)$ and then we call $\procB(R_1, B_1, r)$.
\end{itemize}
\end{itemize}
\item $\alpha=1$. Recall that in this case we connected $b$ to a red point $x_s$. Moreover, we looked at two cases depending on whether or not $l_1$ intersects $C_2$ and $l_2$ intersects $C_1$. 

\begin{enumerate}[leftmargin=*,itemindent=30pt, labelindent=0pt,label=Case 2.\arabic*:]
\item {\em $l_1$ does not intersect $C_2$, or $l_2$ does not intersect $C_1$}. By symmetry, we assume $l_1$ does not intersect $C_2$. We consider three cases: 
\begin{itemize}
 \item {\em $x_s$ is unsaturated.} We connect $b$ to $x_s$ and call $\procB(R',B',x_s)$.
 \item {\em $x_s$ is saturated.} We connect $b$ to $x_s$. Then we call $\passrefine(R',B',x_s)$, then call $\procB(R',B',x_s)$.
 \item {\em $x_s$ is dummy.} Let $x$ be the saturated point corresponding to $x_s$. We connect $b$ to $x$. Then, we call $\passrefine(R',B',x)$, then call $\procB(R',B',x)$.
\end{itemize}

\item {\em $l_1$ intersects $C_2$, and $l_2$ intersects $C_1$}.
Recall that in this case $l_1$ intersects either $C'_2$ or $C''_2$. In either case we connected $b$, $x_{s-1}$, $y_i$ to $x_s$, and then called $\procA(R'_1,B'_1,x_{s-1})$ and $\procA(R'_2,B'_2,y_i)$ (assuming $x_s\in R_1$). We have to make sure $b$, $x_{s-1}$, and $y_i$ be connected to an unsaturated red point. We also have to make sure the size conditions for both $\procA(R'_1,B'_1,x_{s-1})$ and $\procA(R'_2,B'_2,y_i)$ are satisfied. In order to do that, we distinguish between two cases: $x_s\neq r_1$ and $x_s=r_1$. In either case, we compute $B'_1$ and $B'_2$ as usual, and we show how to compute $R'_1$ and $R'_2$.
\begin{enumerate}[leftmargin=*,itemindent=40pt, labelindent=0pt,label=Case 2.2.\arabic*:]
\item $x_s\neq r_1$. Before computing $R'_1$ and $R'_2$ we do the following. 
\begin{enumerate}[label=\alph*.]
 \item If $x_s$ is unsaturated, then we connect $b$, $x_{s-1}$,  $y_i$ to $x_s$.
\item If $x_s$ is saturated, then we connect $b$, $x_{s-1}$,  $y_i$ to $x_s$, then call $\passrefine(R_1,B_1,x_s)$. 
\item If $x_s$ is dummy, let $x$ be the saturated point corresponding to $x_s$. We connect $b$, $x_{s-1}$, $y_i$ to $x$, then call $\passrefine(R_1,B_1,x)$. 
\end{enumerate}

At this point we have $|R_1|=k'|B_1|+1$, with $2\leqslant k'\leqslant k$, and $|R_2|=k|B_2|+1$, with $k\geqslant 2$. 
Note that $r_1$ is in both $R_1$ and $R_2$. 
Now we show how to compute $R'_1$ and $R'_2$; if $r_1$ is saturated/dummy, then we have to make sure $r_1$ and its corresponding dummy/saturated point do not lie in different sets. We differentiate between the following cases: 
\begin{itemize}
 \item {\em $r_1$ is unsaturated.} We compute $R'_1$ and $R'_2$ as usual: $R'_1=R_1-\{x_s\}$ and $R'_2=R_2-\{r_1\}$; this makes $|R'_1|=k'|B'_1|$ and $|R'_2|=k|B'_2|$.
\item {\em $r_1$ is saturated $($resp. dummy$)$ and its corresponding dummy $($resp. saturated$)$ point belongs to $R_1$.} We compute $R'_1$ and $R'_2$ as usual: $R'_1=R_1-\{x_s\}$ and $R'_2=R_2-\{r_1\}$.
\item {\em $r_1$ is saturated and its corresponding dummy point, say $r'_1$, belongs to $R_2$.} We compute $R'_1=R_1-\{x_s\}$ and $R'_2=R_2-\{r_1, r'_1\}$. Then, we have $|R'_1|=k'|B'_1|$ and $|R'_2|=k|B'_2|-1=k(|B'_2|-1)+(k-1)$.
\item {\em $r_1$ is dummy and its corresponding saturated point, say $r$, belongs to $R_2$.} We compute $R'_1=(R_1\cup\{r\})-\{x_s,r_1\}$ and $R'_2=R_2-\{r_1, r\}$. Then, we have $|R'_1|=k'|B'_1|$ and $|R'_2|=k(|B'_2|-1)+(k-1)$.
\end{itemize}

In all cases, the size conditions for both $\procA(R'_1,B'_1,x_{s-1})$ and $\procA(R'_2,B'_2,y_i)$ are satisfied. Now we call $\procA(R'_1,B'_1,x_{s-1})$ and $\procA(R'_2,B'_2, y_i)$.

\item $x_s=r_1$.
Observe that in this case $x_s=r_1=y_t$, and $y_i=y_{t-1}$. We distinguish between three cases depending on whether $r_1$ is unsaturated, saturated, or dummy.

\begin{itemize}
 \item {\em $r_1$ is unsaturated.} We connect $b$, $x_{s-1}$, $y_i$ to $r_1$. Then, we compute $R'_1$ and $R'_2$ as usual: $R'_1=R_1-\{r_1\}$ and $R'_2=R_2-\{r_1\}$; this makes $|R'_1|=k|B'_1|$ and $|R'_2|=k|B'_2|$.
\item {\em $r_1$ is saturated}. Without loss of generality assume its corresponding dummy point, $r'_1$, belongs to $R_1$. We connect $b$, $x_{s-1}$, $y_i$ to $r_1$. Then, we compute $R'_1=R_1-\{r_1, r'_1\}$ and $R'_2=R_2-\{r_1\}$; this makes $|R'_1|=k(|B'_1|-1)+(k-1)$ and $|R'_2|=k|B'_2|$.
\item {\em $r_1$ is dummy.} Without loss of generality assume its corresponding saturated point, say $r$, belongs to $R_1$. We connect $b$, $x_{s-1}$, $y_i$ to $r$. Then, we compute $R'_1=R_1-\{r_1, r\}$ and $R_2=R_2-\{r_1\}$; this makes $|R'_1|=k(|B'_1|-1)+(k-1)$ and $|R'_2|=k|B'_2|$. 
\end{itemize}
In all cases, the size conditions for both $\procA(R'_1,B'_1,x_{s-1})$ and $\procA(R'_2,B'_2,y_i)$ are satisfied. Now we call $\procA(R'_1,B'_1,x_{s-1})$ and $\procA(R'_2,B'_2, y_i)$.
\end{enumerate}
\end{enumerate}
\item Recall that $R$ is a set of size $k|B|$ with some dummy points. If there is a blue point, $b$, on $\CH{R\cup B}$ then we simply call $\procA(R,B,b)$. Assume, all points of $\CH{R\cup B}$ are red. Recall that in Case 2 of $\setup$ we choose an arbitrary red point $r$ on $\CH{R\cup B}$. Here we show how to choose $r$ as an unsaturated point; the next steps would be the same as in Case 2 of $\setup$. 
Select a point $r_1$ on $\CH{R\cup B}$. We distinguish between the following three cases:
\begin{itemize}
\item {\em $r_1$ is unsaturated.} We choose $r$ to be $r_1$.
\item {\em $r_1$ is saturated.} First, we pair $r_1$'s corresponding dummy point to an unsaturated point in $R$ that is different from $r_1$ (since we have more red points than blue points at the beginning, such an unsaturated point exists). Then, we choose $r$ to be $r_1$.

\item {\em $r_1$ is dummy.} Let $r_2$ be $r_1$'s corresponding saturated point. First, we pair $r_1$ to an unsaturated point in $R$. Then, we choose $r$ to be $r_2$.
\end{itemize}
Now $r$ is an unsaturated point on $\CH{R\cup B}$; this makes sure that the only dummy point added in this stage will not be in the final tree.
\end{enumerate}
This completes the proof of Conjecture~\ref{conj2}.

\section{Proof of Theorem~\ref{thr0}}
\label{contribution-section}
In this section we prove Theorem~\ref{thr0}: Given two disjoint sets, $R$ and $B$, of points in the plane such that $|B|\leqslant|R|$ and $R\cup B$ is in general position, and let $\delta=\left\lceil\frac{|R|-1}{|B|}\right\rceil$; we prove there exists a plane bichromatic tree on $R\cup B$ whose maximum vertex degree is at most $\max\{3, \delta + 1\}$; this is the best possible upper bound on the maximum degree.

We differentiate between two cases: $|R|<|B|+2$ and $|R|\geqslant |B|+2$.

\begin{enumerate}[wide, labelindent=0pt,label={Case \arabic*:}]
\item $|R|<|B|+2$. In this case $|R|=|B|$ or $|R|=|B|+1$, and hence $\delta = 1$. As shown in Section~\ref{introduction-section}, it may not be possible to find a plane bichromatic tree of degree $2$ (i.e., $\delta+1$), and hence $3$ is the smallest possible degree. On the other hand, when $|R|=|B|$, Kaneko~\cite{Kaneko1998} showed how to compute a $3$-tree on $R\cup B$, and when $|R|=|B|+1$ by Conjecture~\ref{conj2} there exists a $3$-tree on $R\cup B$. This proves Theorem~\ref{thr0} for this case.   

\item $|R|\geqslant |B|+2$. In this case $\delta\geqslant 2$. As shown in Section~\ref{introduction-section}, $\delta+1$ is the smallest possible degree. Since $|B|<|R|$, we have $(k-1)|B|<|R|\leqslant k|B|$, for some $k\geqslant 2$. We distinguish between two cases where $(k-1)|B|+1<|R|\leqslant k|B|$ and $|R|=(k-1)|B|+1$. First assume $(k-1)|B|+1<|R|\leqslant k|B|$, with $k\geqslant 2$. In this range we have $\delta=\left\lceil\frac{|R|-1}{|B|}\right\rceil=\left\lceil\frac{|R|}{|B|}\right\rceil=k$. By Conjecture~\ref{conj2} there exists a $(\delta+1)$-tree on $R\cup B$. This proves Theorem~\ref{thr0} for this case.

Now, assume $|R|=(k-1)|B|+1$, with $k\geqslant 2$. If $k=2$, then $|R|=|B|+1$; we have already proved this case in Case 1. Thus, assume $k\geqslant 3$. Let $k'=k-1$. Then, $|R|=k'|B|+1$, with $k'\geqslant 2$.
In this case $\delta=k-1=k'$, and we have to prove the existence of a $(k'+1)$-tree. If there exists a red vertex $r$ on $\CH{R\cup B}$, then by running $\procB(R,B, r)$, we obtain a $(k'+1)$-tree and we are done. Assume all vertices of $\CH{R\cup B}$ are blue. In order to handle this case, first, we prove the following lemma by a similar idea as in the proof of Lemma~\ref{convex-3blue}.

\begin{lemma}
\label{convex-3blue-2}
Let $R$ and $B$ be two sets of red and blue points in the plane, respectively, such that $|B|\geqslant 1$, $|R|=k'|B|+1$, with $k'\geqslant 2$, and $R\cup B$ is in general position. Let $b_1,b,b_2$ be blue points that are counter clockwise consecutive on $\CH{R\cup B}$. Then, in the radial ordering of $R\cup B-\{b\}$ around $b$, there are $k=k'+1$ consecutive red points, $r_1,\dots,r_k$, such that $|R_1|=k'|B_1|+1$ and $|R_2|=k'|B_2|+1$, where $R_1$ $($resp. $B_1$$)$ is the set of red points $($resp. blue points$)$ of $R\cup B-\{b\}$ lying on or to the left of $\ell(b,r_1)$, and $R_2$ $($resp. $B_2$$)$ is the set of red points $($resp. blue points$)$ of $R\cup B-\{b\}$ lying on or to the right of $\ell(b,r_k)$.
\end{lemma}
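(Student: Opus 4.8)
The plan is to mimic the proof of Lemma~\ref{convex-3blue} almost verbatim, adapting the counting function $f$ to the present parameters. By a suitable rotation I would place $b$ as the lowest point of $\CH{R\cup B}$, with $b_1$ to the left and $b_2$ to the right of the vertical line through $b$, so that $b_1$ is first and $b_2$ is last in the clockwise radial ordering of $(R\cup B)-\{b\}$ around $b$. I would then define, for each point $x$ in this ordering,
\begin{align*}
f(x)= & k'\cdot (\text{number of points of } B-\{b\} \text{ on or left of } \ell(b,x))\\
& -(\text{number of points of } R \text{ on or left of } \ell(b,x)).
\end{align*}
Here the red count runs over all of $R$ (there is no distinguished red point to exclude, unlike in Lemma~\ref{convex-3red}), so the endpoint values come out to $f(b_1)=k'\geqslant 2$ and $f(b_2)=k'(|B|-1)-|R|=k'(|B|-1)-(k'|B|+1)=-(k'+1)=-k$. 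Along the ordering $f$ jumps by $+k'$ at each blue point and by $-1$ at each red point.

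Since $f(b_1)>0>f(b_2)$, there is a last point $v$ in the ordering with $f(v)=0$; I would then count how many points must lie strictly after $v$. Because $f(b_2)=-k$ and the only way $f$ can drop is at red points (each by $-1$), and $b_2$ itself contributes $+k'$, there must be at least $k+k'+1$ points after $v$ — in any case enough to extract a block of $k=k'+1$ consecutive points $r_1,\dots,r_k$ strictly following $v$. The key claim, proved exactly as before, is that \emph{all $k$ of these points are red}: if some $r_i$ were blue it would bump $f$ up by $+k'$, giving $f(r_i)\geqslant f(v)-(i-1)+k' = k'-i+1 > 0$ for $i\leqslant k'=k-1$; I should check the borderline index $i=k$ separately, where the cheap bound only gives $f(r_k)\geqslant k'-k+1 = 0$, so strictness needs the observation that $f$ is strictly negative just before $b_2$ (or that a zero after $v$ contradicts maximality of $v$), which forces the contradiction with $v$ being the \emph{last} zero. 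This is the one spot where the larger block size $k=k'+1$ (versus $\alpha\leqslant k$ in Lemma~\ref{convex-3blue}) could bite, so I would treat the last index carefully.

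With all $r_i$ red and $f(r_i)=-i$ established, the rest is the same bookkeeping as in Lemma~\ref{convex-3blue}. Taking the full block $S'=(r_1,\dots,r_k)$ and defining $R_1,B_1$ (on or left of $\ell(b,r_1)$) and $R_2,B_2$ (on or right of $\ell(b,r_k)$), the identity $f(r_1)=k'|B_1|-|R_1|=-1$ gives $|R_1|=k'|B_1|+1$ directly. For the right side I would compute $|R_2|=|R|-|R_1|-|S'|+2$ (the $+2$ accounts for $r_1$ being counted in both $R_1$ and $S'$, and $r_\alpha$ — here $r_k$ — landing in $R_2$), substitute $|R|=k'|B|+1$, $|R_1|=k'|B_1|+1$, $|S'|=k$, and $B_2=B-(B_1\cup\{b\})$, and simplify to $|R_2|=k'|B_2|+1$. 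The main obstacle is genuinely just the boundary-index argument in the redness claim; everything else is a direct transcription of the earlier lemma with $k'$ in place of $k$ and block size $k'+1$ in place of $\alpha$.
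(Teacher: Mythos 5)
Your proposal is correct and follows essentially the same route as the paper's proof: the same rotation and counting function $f$, the same choice of $v$ as the last zero, the same block of $k=k'+1$ points after $v$ with the redness claim, and the same final count $|R_2|=|R|-|R_1|-k+2$. Even your careful handling of the borderline index $i=k$ matches the paper, which resolves it by noting that a point with $f\geqslant 0$ after $v$ yields a zero of $f$ at or after that point, contradicting the maximality of $v$.
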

\begin{proof}
Assume $b$ is the lowest point of $\CH{R\cup B}$, $b_1$ is to the left and $b_2$ is to the right of $b$. We define the function $f$ as follows: For each point $x$ in $R\cup B-\{b\}$, 
\begin{align*}
f(x)= & k'\cdot (\text{the number of points of $B-\{b\}$ lying on or to the left of $\ell(b,x)$})\\
& -(\text{the number of points of $R$ lying on or to the left of $\ell(b,x)$}).
\end{align*}
Since $b_1$ is the first point and $b_2$ is the last point along the clockwise radial ordering of $R\cup B-\{b\}$ around $b$, we have $f(b_1)= k'$ and $f(b_2)=k'(|B|-1)-|R|=-(k'+1)=-k$. The value of $f$ changes by $+k'$ at every blue point and by $-1$ at every red point. Since $f(b_1)>0>f(b_2)$, there exists a point in the radial ordering for which $f$ equals 0. Let $v$ be the last point in this radial ordering where $f(v)=0$. Let $S=(r_1,\dots, r_k)$ be the sequence of $k$ points strictly after $v$ in the radial ordering. The points of $S$ are red, because if $r_i\in S$ is blue, then $f(r_i)\geqslant 0$ and hence there is a point between $r_i$ and $b_2$ (including $r_i$) in the radial ordering for which $f$ equals 0; this contradicts the fact that $v$ is the last point in the radial ordering with $f(v)=0$. We show that $S$ satisfies the statement of the lemma.
Having $r_1$ and $r_k$, we define $R_1$, $B_1$, $R_2$ and $B_2$ as in the statement of the lemma. We have $f(r_1)=k'|B_1|-|R_1|=-1$, and hence $|R_1|=k'|B_1|+1$. Moreover, 
\begin{align*}
|R_2| &= |R|-|R_1|-k +2\\
&= (k'|B|+1)-(k'|B_1|+1)-(k'+1) +2\\
&= k'(|B|-|B_1|-1)+1\\
 &= k'|B_2|+1.
\end{align*}
\end{proof}

Let $b$ be an arbitrary blue point on $\CH{R\cup B}$. Since all the points on $\CH{R\cup B}$ are blue, by Lemma~\ref{convex-3blue-2} there are $k=k'+1$ consecutive red points $r_1, \dots, r_k$ in the radial ordering of $R\cup B-\{b\}$ around $b$ that divide the point set into two pairs of sets $\{R_1,B_1\}$ and $\{R_2,B_2\}$ with $r_1$ on $\CH{R_1\cup B_1}$ and $r_k$ on $\CH{R_2\cup B_2}$ such that $|R_1|=k'|B_1|+1$, $|R_2|=k'|B_2|+1$. Let $T_1$ and $T_2$ be the plane bichromatic $(k'+1)$-trees obtained by $\procB(R_1, B_1, r_1)$ and $\procB(R_2, B_2,r_k)$, respectively; note that $\dgT{T_1}{r_1}\in\{1,2\}$ and $\dgT{T_2}{r_k}\in\{1,2\}$. Then, we obtain a desired $(k'+1)$-tree $T=T_1\cup T_2\cup \{(b,r_1),\dots,(b,r_k)\}$ with $\dgT{T}{r_1}\in\{2,3\}$, $\dgT{T}{r_k}\in\{2,3\}$, and $\dgT{T}{b}=k=k'+1$. This completes the proof of Theorem~\ref{thr0}.
\end{enumerate}

\section{Conclusion}
\label{conclusion-section}
In this paper, we answered the question posed by Abellanas et al. in 1996~\cite{Abellanas1996}, in the affirmative, by proving the conjectures made by Kaneko in 1998~\cite{Kaneko1998}. In fact we proved a slightly stronger result in Theorem~\ref{thr0}. 

A simple reduction from the convex hull problem shows that the computation of a plane bichromatic spanning tree has an $\Omega(n \log n)$ lower bound. Using a worst-case deletion-only convex hull data structure, we can compute the tree in Theorem~\ref{thr0} in $O(n\cdot \polylog(n))$ time.
\bibliographystyle{abbrv}
\bibliography{Colored-Trees.bib}
\end{document}